\algnewcommand\algorithmicinput{\textbf{Input:}}
\algnewcommand\algorithmicoutput{\textbf{Output:}}
\algnewcommand\Input{\item[\algorithmicinput]}
\algnewcommand\Output{\item[\algorithmicoutput]}
\newtheorem{theorem}{Theorem}
\newtheorem{proposition}{Proposition}
\newtheorem{lemma}{Lemma}
\newtheorem{corollary}{Corollary}
\newtheorem{remark}{Remark}
\title{Centrality of shortest paths: algorithms and complexity results}
\author{Johnson Phosavanh} 
\author{Dmytro Matsypura}
\affil{The University of Sydney}
\begin{document}

\maketitle

\begin{abstract}
The centrality of a node is often used to measure its importance to the structure of a network. Some centrality measures can be extended to measure the importance of a path. In this paper, we consider the problem of finding the most central shortest path. We show that the computational complexity of this problem depends on the measure of centrality used and, in the case of degree centrality, whether the network is weighted or not. We develop a polynomial algorithm for the most degree-central shortest path problem with the worst-case running time of $O(|E||V|^2\Delta(G))$, where $|V|$ is the number of vertices in the network, $|E|$ is the number of edges in the network, and $\Delta(G)$ is the maximum degree of the graph. In addition, we show that the same problem is NP-hard on a weighted graph. Furthermore, we show that the problem of finding the most betweenness-central shortest path is solvable in polynomial time, while finding the most closeness-central shortest path is NP-hard, regardless of whether the graph is weighted or not. We also develop an algorithm for finding the most betweenness-central shortest path with a running time of $O(|E|^2|V|^2)$ on both weighted and unweighted graphs. To conclude our paper, we conduct a numerical study of our algorithms on synthetic and real-world networks and compare our results to the existing literature.
\end{abstract}

\section{Introduction}

\subsection{Motivation}

The concept of centrality plays a vital role in understanding network structure. It allows us to measure the relative importance of nodes in a network. For instance, in social networks, the more central a node is, the better information can spread from and through it. Various metrics have been proposed to measure the centrality of nodes, such as betweenness centrality, closeness centrality, degree centrality, and eigenvector centrality, to name a few. In their seminal work, \citet{Everett1999} propose to extend some of these metrics and consider the centrality of a group of nodes. The authors extend the definitions of degree, closeness, and betweenness centrality to groups of nodes to quantitatively study their influence. With these measures, a natural question arises: what is the most central group of nodes with a pre-defined structure? The first attempts to look at this class of problems date back to the early 1980's \citep{MitchelCockayne1981, Slater1982}. In these papers, the authors restrict the group of nodes to be a simple path and consider the problem of finding paths of minimum eccentricity and minimum distance in a tree. In both cases, the authors develop linear algorithms to solve the problem. \citet{Gomez2023} extend the work of \citet{MitchelCockayne1981} and \citet{Slater1982}. The authors restrict their analysis to path eccentricity, and provide upper bounds for the minimum eccentricity for arbitrary and $k$-connected graphs.

\citet{dawande2012} introduce the portal problem, where the objective is to find a set of at most $k$ nodes with the highest normalised betweenness centrality score. The authors show that the portal problem is strongly NP-hard. \citet{Vogiatzis2014} investigate the problem of finding the most and the least central cliques in a graph. The authors formulate linear 0-1 programs and examine their performance on real-world and synthetic instances. \citet{Vogiatzis2019} study a similar problem of finding the most degree-central induced star. They propose an integer program and heuristic greedy algorithms and apply them to find the most essential proteins in protein-protein interaction networks. \citet{Camur2021} extend this work by developing faster linear 0-1 programs and presenting classes of networks in which the problem is solvable in polynomial time.

\citet{Matsypura2023} investigate the problem of finding the most degree-central walks in a graph. The authors consider problems of finding the most degree-central walk, trail, path, induced path, and shortest path. They show that the problems are nested in the sense that a shortest path is an instance of an induced path, an induced path is an instance of a simple path, which is an instance of a trail, which, in turn, is an instance of a walk. Further, all the problems considered in that study were shown to be NP-hard, except for the problem of finding the most degree-central shortest path, which can be solved in polynomial time using an algorithm based on the breadth-first search technique. We will refer to this algorithm as the MVP algorithm. For all the remaining problems, the authors developed mixed-integer programs and proposed heuristic solutions as a warm start to improve performance. For a more thorough review of group centrality optimisation problems, we refer the reader to a recent survey by \citet{Camus2024}, who conclude that ``there is a research gap for these novel group [walk and path group] centrality metrics with respect to closeness and betweenness centrality that we believe needs to be addressed.''

In this paper, we address the research gap identified by \citet{Camus2024} and focus on centrality of shortest paths. We first develop an efficient polynomial algorithm to find the most degree-central shortest path. Our computational experiments demonstrate the advantage of the proposed algorithm relative to the MVP algorithm. We also show that the same problem on a weighted graph is NP-hard even in the presence of just two distinct weights. We then study the problem of finding the most betweenness-central shortest path and show that it is solvable in polynomial time for both weighted and unweighted cases. Finally, we prove that the problem of finding the most closeness-central shortest path is NP-hard regardless of whether the weights are present. 

The class of problems we are considering extends the classical facility location problem, where a central planner seeks to determine which set of factories or warehouses to open to minimise transportation costs or maximise revenue. In our case, we apply the restriction that the selected facilities must also form a shortest path between the start and end nodes. This class of problems is similar to the hub line location problem proposed by \citet{deSa2015}, where the goal is to connect cities along a single path while maximising the network's reach. The hub line location problem can be used to determine routes for efficient and direct high-capacity transport systems such as rail, which may be supplemented by trucks and drones for last-mile delivery, or by buses and bike share systems in the case of public transportation \citep{To2015}. During natural disasters, authorities may use this model to prioritise locations for setting up evacuation centres by maximising the number of affected communities that can be reached most efficiently. The hub line location problem can also be used to plan fibre optic cable connections \citep{GutierrezJarpa2010}. In this problem, a planner aims to establish a fibre optic network interconnecting WiFi antennas or individual local users. In all the infrastructure applications mentioned above, the cost of establishing a path can be quite high, naturally leading to the question: what is the most cost-effective way to create a path that maximises the network's reach? Similarly, in social networks, by considering each user as a node, we can apply our model to identify the most influential users that can be reached most efficiently \citep{Peng2017}. This information can be used effectively by an adversarial player when distributing (mis)information and software with malicious intent.

\subsection{Problem definition}

Let $G = (V, E)$ be an unweighted (possibly directed) graph with a set of vertices $V$ and a set of edges $E$. Without loss of generality, we assume that $G$ is connected if it is undirected and weakly connected if it is directed. Following the definitions provided by \citet{Matsypura2023}, we use $P$ to denote a path, i.e., $P$ is a finite sequence of distinct adjacent vertices in $G$. We write $P = \langle v_1, v_2, \dots, v_n \rangle$ for some $v_1, \ldots, v_n \in V$ when we need to emphasise the vertices on the path. We also use $\mathcal{SP}(G)$ to denote the set of all shortest paths between all pairs of nodes in $G$. 

Given a graph $G$ and a measure of centrality $C(P)$, we aim to solve the following problem:
\begin{equation}
\label{eq:problem}
\max \{C(P) \colon P \in \mathcal{SP}(G)\}.
\end{equation}
In other words, we seek to find a path $P$ with the largest centrality under the condition that $P$ is a shortest path between a pair of nodes in $G$. Problem~\eqref{eq:problem} is closely related to the Central Path problem, introduced by \citet{MitchelCockayne1981}, with two notable differences. First, in the Central Path problem, the aim is to find a simple path, whereas our path of interest must be shortest. Second, in the Central Path problem, the measure of centrality is restricted to eccentricity, whereas we do not impose such restriction. Throughout the paper, we use three different measures of centrality: degree, betweenness, and closeness. We define each of them as it becomes necessary.

\subsection{Structure of the paper}

The remaining paper is structured as follows: in \Cref{sec:algorithm}, we study Problem~\eqref{eq:problem} with degree centrality as a measure of centrality $C(P)$ and present a polynomial-time algorithm that can be used to solve it.  
In \Cref{sec:weighted}, we extend our analysis to the case of graphs with weighted edges and show that the problem becomes NP-hard even with just two distinct weights. In \Cref{sec:btw}, we consider Problem~\eqref{eq:problem} with betweenness centrality as a measure of centrality $C(P)$. We show that this problem is solvable in polynomial time for both weighted and unweighted graphs and present a polynomial-time algorithm that can be used to solve it. In \Cref{sec:cls}, we turn our attention to closeness centrality. We prove that Problem~\eqref{eq:problem} with closeness centrality as a measure of centrality $C(P)$ is NP-hard regardless of whether the graph is weighted or not. This is followed by \Cref{sec:computational-experiments}, where we conduct an extensive numerical study of the proposed algorithms on synthetic and real-world graph instances and provide insights into the results. Finally, we provide concluding remarks in \Cref{sec:conclusion}.

\section{Degree centrality}\label{sec:algorithm}
We first consider the problem of finding the most degree-central shortest path, which we define as follows. Let $\mathcal{N}(P)$ denote the neighbourhood of path $P$, i.e., the set of vertices adjacent to $P$, excluding the vertices of $P$ itself:
\[
\mathcal{N}(P) := \{v \colon (u, v) \in E, u \in P\} \setminus P.
\]
For path $P$, we define \emph{path degree centrality} $C_d(P)$ to be the number of vertices adjacent to it, i.e., $C_d(P) := |\mathcal{N}(P)|$. For example, path degree centrality can be used to represent the number of users in a computer network that are immediately susceptible to an attack by an adversarial player who has control over a series of computers. Our definition is consistent with the definition of \emph{group degree centrality} in \cite{Everett1999}. Note that when the path consists of a single node, path degree centrality reduces to the traditional \emph{node degree centrality}. It then follows that the problem of finding the most degree-central shortest path becomes
\begin{equation}\label{eq:degree-problem}
    \max \{C_d(P) \colon P \in \mathcal{SP}(G)\}.
\end{equation} 

Before formally introducing the algorithm, we prove the following lemma that underpins many efficient algorithms for solving problems involving shortest paths.

\begin{lemma}\label{lem:shortest-path}
    If $\langle s, p_1, \ldots, p_{k}, t \rangle$ is a shortest path from $s$ to $t$, then $\langle s, p_1, \ldots, p_{k} \rangle$ is a shortest path from $s$ to $p_{k}$.
\end{lemma}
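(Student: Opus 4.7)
The plan is to prove the lemma by contradiction, using the standard optimal substructure argument for shortest paths. Suppose $P = \langle s, p_1, \ldots, p_k, t \rangle$ is a shortest $s$-$t$ path of length $k+1$, but its prefix $P' = \langle s, p_1, \ldots, p_k \rangle$ is not a shortest $s$-$p_k$ path. Then there exists an $s$-$p_k$ path $Q$ of length strictly less than $k$, and the aim is to use $Q$ to build an $s$-$t$ path shorter than $P$, contradicting the hypothesis on $P$.

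Concretely, I would take $Q$ and append the edge $(p_k, t)$, which exists because it is used in $P$. The resulting $s$-$t$ walk has length at most $|Q| + 1 < k + 1$. The main step is therefore an edge-counting comparison, which is straightforward once the correct object has been identified.

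The only subtlety — and the step I would flag as the main (modest) obstacle — is that the concatenation of $Q$ with the edge $(p_k, t)$ is a priori only a walk, since $t$ could already appear as an internal vertex of $Q$. I would handle this by noting the standard fact that in any graph, any $s$-$t$ walk contains an $s$-$t$ path of no greater length (obtained by repeatedly excising cycles), so the existence of an $s$-$t$ walk of length less than $k+1$ implies the existence of an $s$-$t$ path of length less than $k+1$. This contradicts the assumption that $P$ is a shortest $s$-$t$ path and completes the argument.

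Two brief remarks on the write-up: I would keep the argument at the level of edge counts (since the graph is unweighted, ``length'' unambiguously means the number of edges), and I would state the walk-to-path reduction explicitly rather than tacitly, as it is the only place where the path (as opposed to walk) formulation needs care.
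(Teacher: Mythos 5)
Your proposal is correct and follows essentially the same route as the paper: assume a shorter $s$--$p_k$ path exists, append the edge $(p_k,t)$, and derive a contradiction with the minimality of the original $s$--$t$ path. The one refinement you add --- explicitly invoking the walk-to-path reduction in case $t$ already lies on the shorter $s$--$p_k$ path --- is a detail the paper's proof passes over silently, and including it only makes the argument more complete.
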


\begin{proof}
    Suppose there is a shorter path from $s$ to $p_{k}$, and let this be $\langle s, \tilde{p}_1, \ldots, \tilde{p}_{j},  p_k \rangle$. This implies that $\langle s, \tilde{p}_1, \ldots, \tilde{p}_{j}, p_{k}, t\rangle$ would be a shorter path from $s$ to $t$. However, this cannot be true as $\langle s, p_1, \ldots, p_{k}, t \rangle$ is a shortest path from $s$ to $t$.
\end{proof}

We now extend \Cref{lem:shortest-path} to path degree centrality.

\begin{lemma}\label{lem:most-central-path}
    If $\langle s, p_1, \ldots, p_{k - 1}, p_{k}, t \rangle$ is a most degree-central shortest path from $s$ to $t$, then $\langle s, p_1, \ldots, p_{k - 1} \rangle$ is a most degree-central shortest path from $s$ to $p_{k - 1}$.
\end{lemma}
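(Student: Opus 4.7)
The natural strategy is proof by contradiction, extending the exchange argument of Lemma~\ref{lem:shortest-path}. Let $P = \langle s, p_1, \ldots, p_{k-1}, p_k, t\rangle$ denote the given most degree-central shortest $st$-path and write $P_1$ for its prefix $\langle s, p_1, \ldots, p_{k-1}\rangle$. By Lemma~\ref{lem:shortest-path}, $P_1$ is a shortest path from $s$ to $p_{k-1}$. Suppose, for contradiction, that $P_1$ is not a \emph{most} degree-central such path; then there exists another shortest path $P^* = \langle s, q_1, \ldots, q_{k-2}, p_{k-1}\rangle$ from $s$ to $p_{k-1}$ with $C(P^*) > C(P_1)$.

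I would then form the candidate $P' = \langle s, q_1, \ldots, q_{k-2}, p_{k-1}, p_k, t\rangle$ and first verify that it is a valid simple shortest $st$-path. Its length matches that of $P$ (both use $k+1$ edges), so if $P'$ is simple then it also attains the shortest-path distance. Simplicity reduces to checking that neither $p_k$ nor $t$ appears in $V(P^*)$: if either vertex occurred at some position $q_j$, the corresponding prefix of $P^*$ would be an $s$-to-$p_k$ or $s$-to-$t$ path of length at most $k-1$, contradicting the shortest-path distances implied by applying Lemma~\ref{lem:shortest-path} to $P$.

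The main obstacle is the final step: deducing $C(P') > C(P)$, which would contradict the maximality of $P$. Writing
\[
\mathcal{N}(P) = \bigl(\mathcal{N}(P_1) \cup N_G(p_k) \cup N_G(t)\bigr) \setminus V(P), \qquad \mathcal{N}(P') = \bigl(\mathcal{N}(P^*) \cup N_G(p_k) \cup N_G(t)\bigr) \setminus V(P'),
\]
where $N_G(\cdot)$ denotes the ordinary graph neighborhood, one wants to transfer the strict inequality $C(P^*) - C(P_1) > 0$ into the analogous strict inequality $C(P') - C(P) > 0$. The delicate point is that a witness vertex $v \in \mathcal{N}(P^*) \setminus \mathcal{N}(P_1)$ may already belong to $N_G(p_k) \cup N_G(t)$, in which case it is counted in both $\mathcal{N}(P)$ and $\mathcal{N}(P')$ and cancels from the comparison; conversely, vertices of $V(P_1) \setminus V(P^*)$ re-enter the ambient graph after the swap and may become new $P'$-neighbors. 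Resolving these overlaps — presumably by exploiting the shortest-path structure of $P$ to restrict the possible adjacencies between the $q_j$'s and $\{p_k, t\}$ — is where I expect the genuine work of the proof to concentrate.
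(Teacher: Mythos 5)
Your strategy is the same as the paper's: an exchange argument that splices the allegedly better prefix $P^*$ onto the tail $\langle p_k, t\rangle$ and tries to conclude $C(P') > C(P)$. The preliminary steps you do carry out (shortest-ness and simplicity of $P'$ via Lemma~\ref{lem:shortest-path} and distance counting) are correct. But the obstacle you flag at the end --- that a witness $v \in \mathcal{N}(P^*) \setminus \mathcal{N}(P_1)$ may already be adjacent to $p_k$ or $t$ and hence cancel from the comparison --- is not a technicality waiting to be resolved by the shortest-path structure; it is fatal. Distance counting does rule out a witness adjacent to $t$ (it would have to be adjacent to $p_{k-1}$ as well, hence already lie in $\mathcal{N}(P_1)$), and it rules out a witness adjacent to $p_k$ through any $q_j$ with $j \leq k-3$; but it leaves open exactly one configuration: $v$ adjacent to $q_{k-2}$ and to $p_k$ and to nothing on $P_1$. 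Such a $v$ raises $C(P^*)$ above $C(P_1)$ yet adds nothing to $\mathcal{N}(P')$ beyond $\mathcal{N}(P)$, since it is already counted via $p_k$. The paper's own proof does not close this hole: it only shows that $t$ shares no neighbours with $\langle s, \ldots, p_{k-2}\rangle$ and then asserts from a figure that ``$p_k$ and $t$ contribute the same vertices to the neighbourhoods of both paths,'' which is precisely the unproved claim.

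In fact no completion of the argument exists, because the lemma is false. Take $V = \{s, p_1, p_2, p_3, t, q_1, v\}$ with edges $sp_1$, $p_1p_2$, $p_2p_3$, $p_3t$, $sq_1$, $q_1p_2$, $q_1v$, $vp_3$ (so $k = 3$). The three shortest $s$--$t$ paths $\langle s,p_1,p_2,p_3,t\rangle$, $\langle s,q_1,p_2,p_3,t\rangle$ and $\langle s,q_1,v,p_3,t\rangle$ each have centrality $2$, so $\langle s,p_1,p_2,p_3,t\rangle$ is a most degree-central shortest path from $s$ to $t$. Yet its prefix $\langle s,p_1,p_2\rangle$ has centrality $2$ (neighbours $q_1$ and $p_3$), while the competing shortest path $\langle s,q_1,p_2\rangle$ has centrality $3$ (neighbours $p_1$, $v$ and $p_3$); so the prefix is not most degree-central. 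Here $v$ is exactly the problematic witness: it is a new neighbour for the prefix swap but is already a neighbour of the full path via $p_3$, so the gain evaporates after extension. Your instinct to stop and scrutinise the overlaps was the right one --- the step you could not justify is the step that fails, both in your attempt and in the paper.
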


\begin{proof}
    First, observe that vertex $t$ cannot share any neighbours with $\langle s, p_1, \ldots, p_{k - 2}\rangle$. To see this, assume that vertex $p_i$ on the path, where $i \leq k - 2$, shares a neighbour $v$ with vertex $t$. Then $\langle s, p_1, \ldots, p_{k - 1}, p_{k}, t \rangle$ cannot possibly be the most degree-central shortest path between $s$ and $t$ as $\langle s, p_1, \ldots, p_{i}, v, t \rangle$ would be a shorter path.

    Now consider the graph in \Cref{fig:most-degree-central-structure} where vertex $t$ does not share any neighbours with any node on path $\langle s, p_1, \ldots, p_{k - 2}\rangle$, but with $p_{k - 1}$ and $p_k$.
    
    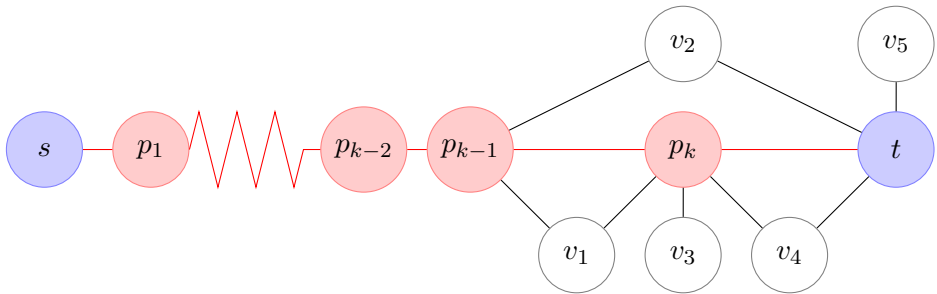
\begin{figure}[H]
        \centering
        \begin{tikzpicture}[scale=1.4]
            \clip (-0.5,-1.5) rectangle (8.5,1.5);
    
            \begin{scope}[circle,minimum size=10mm]
                \node at (0, 0) [draw=blue!50,fill=blue!20] (s) {$s$};
                \node at (1, 0) [draw=red!50,fill=red!20] (1) {$p_1$};
                \node at (3, 0) [draw=red!50,fill=red!20] (k-2) {$p_{k - 2}$};
                \node at (4, 0) [draw=red!50,fill=red!20] (k-1) {$p_{k - 1}$};
                \node at (6, 0) [draw=red!50,fill=red!20] (k) {$p_{k}$};
                \node at (5, -1) [draw=black!50] (v1) {$v_1$};
                \node at (6, 1) [draw=black!50] (v2) {$v_2$};
                \node at (6, -1) [draw=black!50] (v3) {$v_3$};
                \node at (7, -1) [draw=black!50] (v4) {$v_4$};
                \node at (8, 1) [draw=black!50] (v5) {$v_5$};
                \node at (8, 0) [draw=blue!50,fill=blue!20] (t) {$t$};
            \end{scope}
            
            \path [every node/.style={font=\sffamily\small, sloped}]
            (s) edge [red] node [above] {} (1)
            (k-1) edge [red] node [above] {} (k)
            (k-2) edge [red] node [above] {} (k-1)
            (k-1) edge node [above] {} (v1)
            (k-1) edge node [above] {} (v2)
            (v2) edge node [above] {} (t)
            (v1) edge node [above] {} (k)
            (k) edge [red] node [above] {} (t)
            (k) edge node [above] {} (v3)
            (k) edge node [above] {} (v4)
            (v4) edge node [above] {} (t)
            (v5) edge node [above] {} (t);
            
            \draw [decorate, decoration={zigzag, segment length=5mm, amplitude=5mm}, red] (1) -- (k-2);
            
        \end{tikzpicture}
        \caption{Illustration of \Cref{lem:most-central-path}. The path $\langle s, p_1, \ldots, p_{k - 1} \rangle$ must be a most degree-central shortest path between $s$ and $p_{k - 1}$ if the path $\langle s, p_1, \ldots, p_{k}, t\rangle$ is a most degree-central shortest path between $s$ and $t$.}
        \label{fig:most-degree-central-structure}
    \end{figure}
    
    From the diagram, it is clear that if $\langle s, p_1, \ldots, p_{k - 1}, p_{k}, t \rangle$ is a most degree-central shortest path from $s$ to $t$, then $\langle s, p_1, \ldots, p_{k - 1} \rangle$ is a most degree-central shortest path from $s$ to $p_{k - 1}$. To see this, suppose that there is an alternate shortest path $\langle s, \tilde{p}_1, \ldots, \tilde{p}_{k - 2}, p_{k - 1}\rangle$ to vertex $p_{k - 1}$ that shares no neighbours with vertex $t$ and has a higher degree centrality. In this case, path $\langle s, \tilde{p}_1, \ldots, \tilde{p}_{k - 2}, p_{k - 1}, p_k, t\rangle$, would have a degree centrality greater than that of path $\langle s, p_1, \ldots, p_{k - 1}, p_{k}, t \rangle$ as vertices $p_k$ and $t$ contribute the same vertices to the neighbourhoods of both paths.
\end{proof}

We now introduce the following notation to simplify the exposition. First, consider a source node $s$. Let $d_u$ and $P_u$ be the shortest distance and most degree-central shortest path between nodes $s$ and $u$, respectively, and let $\mathcal{P}_u$ be the set of vertices that can directly precede vertex $u$ on any shortest path from $s$ to $u$. Additionally, we introduce the operation $\langle P, u, v\rangle$, which we use to denote extending the path $P$ to vertices $u$, then $v$.

To solve Problem~\eqref{eq:problem}, we extend the breadth-first search (BFS) algorithm. The BFS algorithm can be used to traverse a graph from a source node $s$ to any other node in the network by iteratively extending paths already constructed in previous iterations. Specifically, we begin with the source node $s$ and, at each iteration of the algorithm, consider a queue of vertices $Q$ such that $q \in Q$, if a path from $s$ to $q$ has been established in a previous iteration and $q$ hasn't been processed yet. Then, we pop the first element in $Q$, denoting it $u$, and consider the set  $\{(u, v) \colon (u, v)\in E\}$, i.e., edges that can be connected to $u$. We then check if $v$ has been processed in a previous iteration, and if it has not, we extend the path to $v$ and add $v$ to the end of $Q$ (this is the traditional BFS algorithm). Note that this procedure also identifies the shortest path from a source node $s$ to all other nodes in an unweighted graph. To simplify the explanation of our algorithm, we adopt some aspects of Dijkstra's algorithm \citep{Dijkstra1959}, which extends the BFS procedure to find shortest paths in weighted graphs. Instead of checking if a node has been reached, we record the shortest distance from a source node $s$ to all other vertices in the graph. When we extend a path, we compare the length of the previously calculated shortest path from $s$ to $v$ and a new path from $s$ to $u$ and then to $v$ and take the shorter path. Then, to obtain the most degree-central shortest path, we modify the algorithm in the following manner: for each vertex $v$, we place all nodes that can directly precede $v$ on the shortest path from $s$ to $v$ in the set $\mathcal{P}_v$. Each time we consider extending the shortest path $\langle s, \dots, u\rangle$ to $v$, we compare the neighbourhood of the existing most degree-central shortest path from $s$ to $v$ against the neighbourhoods of paths obtained by extending the most degree-central shortest path to $w$, $P_w$, to $u$ then $v$ for all $w \in \mathcal{P}_u$. Furthermore, instead of only considering the case where a path is shorter than a previous path in the traditional Dijkstra's algorithm, we also consider tied shortest paths as we need to compute and compare the degree centrality to ensure the most degree-central shortest path is found.

This is summarised in \Cref{alg:algorithm}.

\begin{algorithm}
    \caption{Finding the most degree-central shortest path}\label{alg:algorithm}
    \begin{algorithmic}[1]
        \Input Graph $G = (V, E)$, starting vertex $s$.
        \Output Most degree-central shortest path $P_v$ from $s$ to $v$ for every $v \in V$.
        \For{$v \in V$}
            \State $d_v \gets + \infty$, $P_v \gets \text{undefined}$, $\mathcal{P}_v = \emptyset$.
        \EndFor
        \State $P_s \gets \langle s \rangle$.
        \State Insert $(0, s)$ to the queue $Q$.
        \While{$Q$ not empty}
            \State $(d_u, u) \gets$ pop next element of $Q$.
            \For{$v \in \{v' \in V \colon (u, v') \in E$\}}
            \State $d_{\text{new}} \gets d_u + 1$.
                \If{$d_{\text{new}} = 1$}
                    \State Insert $(d_{\text{new}}, v)$ to the queue $Q$.
                    \State $d_v \gets d_{\text{new}}$, $P_v = \langle s, v\rangle$, $\mathcal{P}_v \gets \{s\}$.
                \ElsIf{$d_{\text{new}} < d_v$}
                    \State Insert $(d_{\text{new}}, v)$ to the queue $Q$.
                    \State $d_v \gets d_{\text{new}}$, $\mathcal{P}_v \gets \mathcal{P}_v \cup \{u\}$.
                    \For{$w \in \mathcal{P}_u$}
                        \If{$C_d(\langle P_w, u, v\rangle) > C_d(P_v)$}
                            \State $P_v \gets \langle P_w, u, v\rangle$.
                        \EndIf
                    \EndFor
                \ElsIf {$d_{\text{new}} = d_v$}
                    \State $\mathcal{P}_v \gets \mathcal{P}_v \cup \{u\}$.
                    \For{$w \in \mathcal{P}_u$}
                        \If{$C_d(\langle P_w, u, v\rangle) > C_d(P_v)$}
                            \State $P_v \gets \langle P_w, u, v\rangle$.
                        \EndIf
                    \EndFor
                \EndIf
            \EndFor
        \EndWhile
    \end{algorithmic}
\end{algorithm}

\begin{proposition}\label{prop:correctness}
    Given a graph $G = (V, E)$ and a source vertex $s \in V$, \Cref{alg:algorithm} returns the most degree-central shortest paths from $s$ to all other vertices in $V$.
\end{proposition}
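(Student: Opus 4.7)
The plan is to prove correctness by strong induction on the BFS distance $d(s,v)$ from the source $s$. The inductive invariant I would maintain is that, once every vertex at distance at most $k$ from $s$ has been popped from the priority queue, then for every $v$ with $d(s,v) \le k$ the algorithm's variables satisfy: (i) $d_v = d(s,v)$; (ii) $\mathcal{P}_v$ equals the set of all shortest-path predecessors of $v$ in $G$; and (iii) $P_v$ is a most degree-central shortest path from $s$ to $v$.

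Items (i) and (ii) reduce to standard facts about unit-weight Dijkstra. Because all edges have weight one, the priority queue pops vertices in nondecreasing order of their true distance, so each $d_v$ is assigned correctly on first discovery and is never overwritten. Part (ii) then follows directly from the code, since $u$ is added to $\mathcal{P}_v$ precisely when the edge $(u,v)$ is examined while popping $u$ and $d_u+1 = d_v$ --- exactly the shortest-path predecessors of $v$.

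Item (iii) is the substantive step. The base $k=0$ is the initialisation $P_s \gets \langle s\rangle$; the case $k=1$ is the $d_{\text{new}} = 1$ branch, which sets $P_v \gets \langle s, v\rangle$, the unique shortest $s$-$v$ path. For $k\ge 2$, fix $v$ with $d(s,v) = k$ and let $P^\star = \langle s,\ldots,w^\star,u^\star,v\rangle$ be any most degree-central shortest path from $s$ to $v$. By Lemma~\ref{lem:most-central-path}, the prefix $\langle s,\ldots,w^\star\rangle$ is itself a most degree-central shortest path from $s$ to $w^\star$; combined with the inductive hypothesis and the argument in the proof of Lemma~\ref{lem:most-central-path} (that extending two most degree-central prefixes by the same last two vertices preserves equality of centralities), this gives $C(\langle P_{w^\star}, u^\star, v\rangle) = C(P^\star)$. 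Since $u^\star \in \mathcal{P}_v$ and $w^\star \in \mathcal{P}_{u^\star}$, the algorithm examines this candidate when $u^\star$ is popped and edge $(u^\star, v)$ is processed, and $P_v$ is never replaced by a candidate of strictly smaller centrality. Hence $C(P_v) \ge C(P^\star)$ at termination, and since $P_v$ is always a shortest $s$-$v$ path by construction, equality must hold.

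The main obstacle I anticipate is the timing argument: when the edge loop of $u$ forms $\langle P_w, u, v\rangle$, the value $P_w$ read must already be its final optimal value. This reduces to $w \in \mathcal{P}_u \Rightarrow d(s,w) = d(s,u) - 1$, so by the priority-queue ordering every predecessor of $w$ has already been processed before $u$ is popped, and the inductive hypothesis applies to $w$. A small book-keeping check is that the $d_{\text{new}}<d_v$ branch fires only at first discovery of $v$ (unit-weight BFS never strictly lowers a finite distance), so the union $\mathcal{P}_v \gets \mathcal{P}_v \cup \{u\}$ there is effectively $\mathcal{P}_v \gets \{u\}$ and no spurious predecessors remain.
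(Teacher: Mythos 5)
Your proof follows essentially the same route as the paper's: induction over the distance layers of the BFS/Dijkstra order, with Lemma~\ref{lem:most-central-path} supplying the optimal-substructure step and the same exchange argument (two maximally central shortest prefixes extended by the same final two vertices yield equally central paths) justifying that the candidate $\langle P_{w^\star}, u^\star, v\rangle$ examined by the algorithm attains the optimum. You are more explicit than the paper about the predecessor-set invariant and the timing of when $P_w$ is read, but the underlying argument is the same.
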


\begin{proof}
    We show this by induction. 

    \textit{Base case:} The most degree-central shortest path to the source vertex $s$ is just $\langle s\rangle$. Also, it is easy to see that any path connected in the first iteration of the while loop must be the shortest path from $s$ as it is directly connected to the source node. Furthermore, this will be the most degree-central shortest path as there cannot be another path of the same or shorter length to this set of vertices.

     \textit{Inductive step:} Assume that in the $k$th iteration of the while loop, the most degree-central shortest paths to all nodes in $S \subseteq V$ have been found.

     In the $(k + 1)$th iteration, we consider node $u$ where $d_u$ is the smallest distance between the source vertex $s$ and all other previously connected nodes that have not yet been processed. The for-loop in line 8 allows us to branch to all adjacent nodes, while lines 10, 13 and 21 guarantee that only the shortest path will be recorded. The for-loops and if-statements in lines 16-20 and 23-27 then allow us to extend paths $P_w$ to $u$ then $v$ for all $w \in \mathcal{P}_u$, which, by the inductive hypothesis and \Cref{lem:most-central-path}, will return the most degree-central shortest path from $s$ to $v$ that passes through vertex $u$. After considering extending all shortest paths of length $d_u$ in subsequent iterations of the while-loop, and therefore considering all possible $u \in \mathcal{P}_v$, we are guaranteed the most degree-central shortest path from $s$ to $v$.
\end{proof}

\vspace{\baselineskip}

\begin{proposition}\label{prop:complexity-vertices}
    Given a graph $G = (V, E)$, the worst-case running time of \Cref{alg:algorithm} is $O(|E||V|^2)$. 
\end{proposition}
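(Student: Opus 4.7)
My plan is to bound the running time of Algorithm~\ref{alg:algorithm} as a product of four separately estimated $O(|V|)$ factors. First, I would argue that the outer while loop runs at most $O(|V|)$ times. Because the priority queue always extracts the minimum, the popped value $d_u$ is non-decreasing throughout the algorithm, and hence so is $d_{\text{new}} = d_u + 1$. Consequently, once $d_v$ has been assigned a finite value on line 12 or line 15, the test $d_{\text{new}} < d_v$ can never fire again, so each vertex is inserted into $Q$ at most a constant number of times. This gives $O(|V|)$ pops overall, with the associated $O(|V|\log|V|)$ priority-queue work absorbed by the eventual bound.

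Next, I would bound the work done during a single pop of $u$. The neighbour for-loop on line 8 iterates at most $\deg(u) \leq |V|-1$ times, and for each neighbour $v$ that triggers the strict-improvement branch on line 13 or the tied branch on line 21, the inner for-loop over $\mathcal{P}_u$ runs at most $|\mathcal{P}_u| \leq \deg(u) \leq |V|-1$ times. Hence the body of the inner loop is executed at most $|V|^2$ times per pop and at most $|V|^3$ times over the entire run.

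The remaining factor is the cost of a single inner-loop body, which is dominated by computing and comparing $C(\langle P_w, u, v\rangle)$ against $C(P_v)$. Maintaining, for each stored path $P_v$, a characteristic-vector representation of $\mathcal{N}(P_v)$ allows one to obtain $\mathcal{N}(\langle P_w, u, v\rangle)$ as the union of $\mathcal{N}(P_w)$ with $N(u) \cup N(v)$ minus the vertices of $P_w$ together with $\{u,v\}$, and to count its cardinality, in $O(|V|)$ time. Multiplying the four $O(|V|)$ factors then yields the claimed $O(|V|^4)$ bound.

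I expect the last step to be the main obstacle. A naive recomputation of $C(\langle P_w, u, v\rangle)$ from scratch would enumerate each path vertex together with its adjacency list and cost $\Theta(|V|^2)$ per call, which would inflate the total to $O(|V|^5)$. Securing $O(|V|)$ per call therefore requires either committing to the characteristic-vector bookkeeping sketched above (so that each update to a stored $\mathcal{N}(P_v)$ still fits within the $O(|V|^3)$-many inner iterations already counted) or, equivalently, a charging argument that distributes the neighbourhood work across those same iterations. The remaining pieces are routine loop accounting together with the monotonicity observation driving the first paragraph.
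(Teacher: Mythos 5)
Your proposal is correct and follows essentially the same decomposition as the paper: $O(|V|)$ pops of the queue, times $O(|V|)$ neighbours, times $O(|V|)$ predecessors in $\mathcal{P}_u$, times $O(|V|)$ per centrality evaluation under the assumption that each stored path's neighbourhood is maintained incrementally. Your treatment is in fact somewhat more explicit than the paper's, which simply asserts that each vertex enters the queue at most $|V|$ times and that $C(\langle P_w, u, v\rangle)$ costs $O(|V|)$ given stored neighbourhoods, whereas you justify both the single-insertion property via monotonicity of the popped keys and the $O(|V|)$ evaluation via the characteristic-vector bookkeeping.
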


\begin{proof}
\Cref{alg:algorithm} is a modification of the BFS algorithm. The BFS algorithm has a runtime of $O(|V| + |E|)$ as we enumerate all the edges in the graph once in the outer while loop, and we add to the queue $|V|$ times to ensure all vertices are processed.
    
Assuming that $\mathcal{N}(P_w)$ and $\mathcal{N}(P_v)$ have been stored and can be easily accessed at each iteration of the while loop, computing $\mathcal{N}(\langle P_w, u, v\rangle)$ in the inner for-loop for all possible $w$'s takes $O(|V|^2)$ time in the worst case as we need to consider at most $|V|$ $P_w$'s, and computing $C_d(\langle P_w, u, v\rangle)$ takes $O(|V|)$ time, which dominates the running time of the inner for-loop. This operation is completed at most $O(|E|)$ times as there are $|E|$ iterations of the while-loop to process all the edges. As a result, \Cref{alg:algorithm} has runtime of $O(|E||V|^2)$.
\end{proof}

This bound can be tightened. Let $\Delta(G)$ be the maximum degree of graph $G$. Then, the following is true.
\begin{proposition}\label{prop:complexity}
    Given a graph $G = (V, E)$, the worst-case running time of \Cref{alg:algorithm} is $O(|E| |V| \Delta(G))$.
\end{proposition}
\begin{proof}
    There are only $|\mathcal{P}_u| \leq \Delta(G)$ iterations of the inner for-loops in lines 16-20 and 23-27 as there are at most $\Delta(G)$ nodes that may directly precede vertex $u$ on a shortest path to it. As a result, one order of $|V|$ can be replaced with $\Delta(G)$ from the runtime quoted in \Cref{prop:complexity-vertices}.
\end{proof}

\begin{theorem}\label{thm:runtime}
    Problem~\eqref{eq:degree-problem} can be solved in $O(|E| |V|^2 \Delta(G))$ time.
\end{theorem}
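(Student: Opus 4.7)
The plan is to reduce Problem~\eqref{eq:problem} to $|V|$ invocations of Algorithm~\ref{alg:algorithm}. Observe that Algorithm~\ref{alg:algorithm}, given a source vertex $s$, returns the most degree-central shortest path from $s$ to every other vertex $v \in V$ (as established by Proposition~\ref{prop:correctness}). Problem~\eqref{eq:problem} asks for the most degree-central shortest path over all pairs of endpoints, so it suffices to iterate the algorithm over every choice of source vertex.

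Concretely, I would describe the following procedure: for each $s \in V$, run Algorithm~\ref{alg:algorithm} with source $s$ to obtain the collection $\{P_v : v \in V\}$ of most degree-central shortest paths emanating from $s$; then return the path achieving $\max_{s,v \in V} C(P_v)$. Correctness follows immediately from Proposition~\ref{prop:correctness}: every pair $(s,t)$ of endpoints is considered when $s$ is chosen as the source, so the maximum taken over all such paths coincides with the optimum of~\eqref{eq:problem}.

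For the runtime, Proposition~\ref{prop:complexity} gives a worst-case bound of $O(|V|^2(\Delta(G))^2)$ per call to Algorithm~\ref{alg:algorithm}. Running this for each of the $|V|$ candidate sources multiplies the bound by $|V|$, yielding $O(|V|^3(\Delta(G))^2)$. The final comparison over the $O(|V|^2)$ stored paths to pick the maximum centrality is absorbed into this bound, completing the argument.

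There is no real obstacle here: the only subtlety worth flagging explicitly is that the algorithm as stated has a fixed source, so one must be careful to note that Problem~\eqref{eq:problem} does not prescribe endpoints, and hence a loop over all sources (or equivalently, an all-pairs variant) is required. Everything else is arithmetic.
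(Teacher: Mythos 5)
Your proposal is correct and matches the paper's proof exactly: both run Algorithm~\ref{alg:algorithm} once from each of the $|V|$ possible source vertices, invoke Proposition~\ref{prop:complexity} for the $O(|V|^2(\Delta(G))^2)$ per-source bound, and multiply to obtain $O(|V|^3(\Delta(G))^2)$. No differences worth noting.
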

\begin{proof}
    By \Cref{prop:complexity}, \Cref{alg:algorithm} can be used to solve the problem of finding all most degree-central shortest paths from a given source node in $O(|E| |V| \Delta(G))$ time. To solve Problem~\eqref{eq:degree-problem}, we can apply \Cref{alg:algorithm} over all vertices in the network, resulting in an overall running time of $O(|E| |V|^2 \Delta(G))$.
\end{proof}

Comparing this to the running time of Algorithm MVP, which has a runtime of $O(k|V|^6) \sim O(|V|^7)$, with $k$ being the diameter of the graph, \Cref{alg:algorithm} has a runtime of $O(|E| |V|^2 \Delta(G)) \sim O(|V|^5)$ as $|E| \leq |V|^2$ and $\Delta(G) \leq |V|$, which is an improvement of $O(|V|^2)$. In addition, we only require $O(|V|^2)$ space, rather than $O(|V|^3)$.

\section{Degree centrality on weighted graphs}\label{sec:weighted}
In this section, we examine solving Problem~\eqref{eq:degree-problem} on a graph with weighted edges. Hence, throughout this section, we assume that graph $G$ is weighted.

We first show that Problem~\eqref{eq:degree-problem} is NP-hard even with just two distinct weights by showing that it is at least as hard as the \textsc{Maximum Satisfiability} (\textsc{MaxSAT}) problem, which is known to be NP-complete \citep{Krentel1988}.

\vspace{\baselineskip}

\noindent
\begin{tabular*}{\linewidth}{lp{\linewidth-20ex}@{}}
    \toprule
    \multicolumn{2}{l}{\textbf{\textsc{MaxSAT}}} \\
    \textsc{Instance:} & A conjunctive normal form  (CNF) formula consisting of a set of $U$ variables, collection $C$ of clauses over $U$, and positive integer $k$. \\
    \textsc{Question:} & Is there a truth assignment for $U$ that simultaneously satisfies at least $k$ of the clauses in $C$? \\ \bottomrule
\end{tabular*}

\vspace{\baselineskip}
Before we show the reduction, we formally define the decision variant of Problem~\eqref{eq:degree-problem} for a weighted graph.
\vspace{\baselineskip}

\noindent
\begin{tabular*}{\linewidth}{lp{\linewidth-20ex}@{}}
    \toprule
    \multicolumn{2}{l}{\textbf{\textsc{MDCWSP-D}}} \\
    \textsc{Instance:} &  A weighted graph $G$ and a positive integer $\alpha$. \\
    \textsc{Question:} & Is there a shortest path between two vertices in $G$ such that its degree centrality is greater than $\alpha$? \\ \bottomrule
\end{tabular*}

\bigskip

\begin{theorem}\label{thm:np-complete}
    \textsc{MDCWSP-D} is NP-complete.
\end{theorem}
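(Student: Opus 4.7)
The proof has two parts. For NP membership, a certificate is a pair of vertices $(u,v)$ together with a $u$-$v$ path $P$; one verifies in polynomial time, via a single Dijkstra run from $u$, that $|P| = d(u,v)$ and then computes $|\mathcal{N}(P)|$ and compares with $\alpha$.

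For NP-hardness my plan is to polynomially reduce \textsc{Max SAT2} to \textsc{MDCWSP-D}. Given an instance $(U, C, k)$ with $|U| = n$ variables and $|C| = m$ clauses, I would construct a weighted graph $G_\phi$ using only two distinct edge weights, $1$ and some large $W$, from three kinds of gadgets. First, a chain of hubs $s = h_0, h_1, \ldots, h_n = t$ with, between each consecutive pair of hubs, a diamond formed by two parallel unit-weight routes through literal vertices $T_i$ and $F_i$; this forces every $s$-$t$ shortest path to pick exactly one of $\{T_i, F_i\}$ per diamond and so encodes a truth assignment $\sigma$. Second, for each clause $c_j = \ell_{j,1} \lor \ell_{j,2}$, a clause vertex $y_j$ joined by weight-$W$ edges to the two literal vertices representing its literals (with $W = 2n$, say), chosen large enough that no $y_j$ can sit in the interior of any shortest path -- any detour through $y_j$ between two literal vertices is strictly longer than the diamond route. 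Third, a bouquet of $N := m + 2$ pendant vertices attached by unit-weight edges to each of $s$ and $t$, whose role is to give $s$-$t$ shortest paths a centrality boost that no other endpoint pair can match. I would then set $\alpha := 2N + n + k - 1$.

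The forward direction is essentially bookkeeping: for each $s$-$t$ shortest path $P_\sigma$, the neighbourhood consists exactly of the $2N$ pendants at $s$ and $t$, the $n$ unchosen literal vertices $\bar{L}_i$, and the clause vertices $y_j$ adjacent to at least one chosen literal -- which are precisely the clauses satisfied by $\sigma$. Hence $C(P_\sigma) = 2N + n + s(\sigma)$, so any assignment satisfying at least $k$ clauses yields a shortest path with centrality exceeding $\alpha$.

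The main obstacle is the converse: ruling out that some shortest path between another pair of endpoints has centrality greater than $\alpha$ when no assignment satisfies $k$ clauses. My plan is a case analysis organized by endpoint type -- pendants, hubs, literal vertices, clause vertices -- showing in each case that the path either omits an entire bouquet of $N$ pendants or traverses only a strict subset of the diamonds, in which case at most four additional neighbours can be collected from the two boundary hubs and at most $m$ from clause vertices. The choice $N = m + 2$ absorbs all such gains and yields $C(Q) \leq \alpha$ for every non-$s$-$t$ shortest path $Q$ whenever $k \geq 1$. Since the reduction is polynomial in $n + m$ and uses only two distinct edge weights, this establishes NP-hardness and hence NP-completeness.
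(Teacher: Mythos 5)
Your reduction is from the same source problem (\textsc{Max SAT2}) and shares the paper's core idea---a two-track variable gadget in which every end-to-end shortest path selects one literal per variable, heavy edges to clause vertices so that they can only ever be neighbours (never interior vertices), and a count of the form (unchosen literals) $+$ (satisfied clauses) $+$ (terminal decoration)---but it diverges in two genuine ways. First, the gadgets differ: the paper uses a ladder $x_i,\bar{x}_i$ with cross edges and four extra terminals $s,\bar{s},t,\bar{t}$ whose incident edges are \emph{directed} inward to keep them off path interiors, whereas you use hub--diamond chains with degree-one pendant bouquets, which achieves the same exclusion in a purely undirected graph (arguably a cleaner fit to the problem as defined). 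Second, and more substantively, the converse directions are handled by different methods. The paper takes any shortest path of centrality at least the threshold and \emph{transforms} it---stripping terminal or clause-vertex endpoints and arbitrarily extending to a full-length path---arguing the centrality never decreases, then reads off the assignment. You instead make the $s$--$t$ paths \emph{dominate}: the bouquets of size $N$ are calibrated so that every shortest path between any other endpoint pair falls at or below $\alpha$. Both strategies are legitimate; the transformation argument avoids a global case analysis over endpoint pairs, while the domination argument avoids having to prove that each local surgery preserves both shortestness and centrality.

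One caveat on your calibration. With $N=m+2$ and $\alpha=2N+n+k-1=2m+n+k+3$, a path containing exactly one of $s,t$ can collect $N$ pendants, up to $n+1$ literal-vertex neighbours (the two literals of an untraversed boundary diamond both count), a stray hub or terminal, and some clause vertices; if you bound the clause-vertex neighbours only by $m$ and allow ``four additional neighbours from the boundary hubs'' as you propose, the total reaches $2m+n+6>\alpha$ whenever $k\le 2$. The fix is either to enlarge $N$ (say $N=m+5$) or to use the sharper observation that no shortest path can contain both $T_i$ and $F_i$ for any $i$ (they are at distance $2$ and equidistant from everything else), so the clause vertices adjacent to any shortest path are exactly the clauses satisfied by a single partial assignment and hence number at most $k-1$ under the hypothesis of the converse. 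With that refinement the case analysis closes for all $k\ge 1$ and $m\ge 1$. Also note that your stated dichotomy (``omits a bouquet or traverses a strict subset of the diamonds'') misses the pendant-to-pendant and pendant-to-$t$ paths, which do neither; they are harmless because the path endpoints themselves are excluded from the neighbourhood, costing one or two pendants, but they need to be listed.
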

\begin{proof}
First, observe that the degree centrality of a path can be computed in polynomial time; hence, \textsc{MDCWSP-D} is in NP.

Given an instance of \textsc{MaxSAT}, we can construct a directed weighted graph as follows. First, consider an unweighted graph $H$ with $2(|U| + 2)$ nodes and $4(|U| + 1)$ edges with the following structure: we start with two unlinked paths $\langle x_1, \dots, x_{|U|}\rangle$ and $\langle \overline{x}_1, \dots, \overline{x}_{|U|} \rangle$ both of length $|U| - 1$. For $i = 1, \dots, |U| - 1$, we add additional edges $(x_i, \overline{x}_{i + 1})$ and $(\overline{x}_i, x_{i + 1})$ to $H$. This means that the shortest path between any nodes $x_{i}$ or $\overline{x}_i$ and $x_{j}$ or $\overline{x}_j$ is of length $j - i$ for $i < j$. Finally, to complete $H$, we add four additional nodes $s$, $\overline{s}$ that are both connected to $x_1$ and $\overline{x}_1$, and $t$ and $\overline{t}$ that are connected to $x_{|U|}$ and $\overline{x}_{|U|}$. Note that we connect the nodes such that the edges point towards $s$, $\overline{s}$, $t$, and $\overline{t}$. To obtain the weighted graph $G$, we add vertices and edges to $H$ depending on the clauses. For each clause $c \in C$, we add a vertex $y_c$ to $G$ and undirected edges $(z, y_c)$ for each variable $z$ in clause $c$, each with edge weight $|U|$. Given these edge weights, any shortest path in $G$ cannot contain a vertex $y_c$ in the middle of the path because an alternate shorter path will exist. Note: The additional edges can be directed; however, we set them to be undirected to amplify the difference between the unweighted and weighted settings.
    
An example of the structure of $G$ is given in \Cref{fig:np-hardness-proof}.
    
\begin{figure}[!htbt]
\centering
\begin{tikzpicture}[scale=1.4, square/.style={regular polygon, regular polygon sides=4}]
    \begin{scope}[circle, minimum size=13mm]
        \draw
        (0, 0.5) node[draw=blue!50, fill=blue!20] (s){$s$}
        (0, -0.5) node[draw=blue!50, fill=blue!20] (-s){$\overline{s}$}
        (1.2, -0.5) node[draw=blue!50, fill=blue!20] (-1){$\overline{x}_{1}$}
        (1.2, 0.5) node[draw=blue!50, fill=blue!20] (1){$x_{1}$}
        (2.4, -0.5) node[draw=blue!50, fill=blue!20] (-2){$\overline{x}_{2}$}
        (2.4, 0.5) node[draw=blue!50, fill=blue!20] (2){$x_{2}$}
        (3.6, -0.5) node[draw=blue!50, fill=blue!20] (-3){$\overline{x}_{3}$}
        (3.6, 0.5) node[draw=blue!50, fill=blue!20] (3){$x_{3}$}
        (4.8, -0.5) node[draw=blue!50, fill=blue!20] (-4){$\overline{x}_{4}$}
        (4.8, 0.5) node[draw=blue!50, fill=blue!20] (4){$x_{4}$}
        (7.2, -0.5) node[draw=blue!50, fill=blue!20] (-6){$\overline{x}_{|U|-1}$}
        (7.2, 0.5) node[draw=blue!50, fill=blue!20] (6){$x_{|U|-1}$}
        (8.4, -0.5) node[draw=blue!50, fill=blue!20] (-7){$\overline{x}_{|U|}$}
        (8.4, 0.5) node[draw=blue!50, fill=blue!20] (7){$x_{|U|}$}
        (9.6, 0.5) node[draw=blue!50, fill=blue!20] (t){$t$}
        (9.6, -0.5) node[draw=blue!50, fill=blue!20] (-t){$\overline{t}$}
        (2.4, 2) node[draw=red!50, fill=red!20, square, inner sep=-0.2em] (1000){$y_{1}$}
        (2.4, -2) node[draw=red!50, fill=red!20, square, inner sep=-0.2em] (1001){$y_{2}$}
        (4.8, 2) node[draw=red!50, fill=red!20, square, inner sep=-0.2em] (1002){$y_{3}$}
        (4.8, -2) node[draw=red!50, fill=red!20, square, inner sep=-0.2em] (1003){$y_{4}$}
        (7.2, 2) node[draw=red!50, fill=red!20, square, inner sep=-0.2em] (1004){$y_{|C|-1}$}
        (7.2, -2) node[draw=red!50, fill=red!20, square, inner sep=-0.2em] (1005){$y_{|C|}$};
    \end{scope}
    
    \node at ($(4)!.5!(6)$) {\ldots};
    \node at ($(-4)!.5!(-6)$) {\ldots};
    \node at ($(4)!.5!(-6)$) {\ldots};
    
    \begin{scope}[dashed, Stealth-Stealth, thick]
        \draw[red] (-1) to (1004);
        \draw[red] (-2) to (1003);
        \draw[red] (-3) to (1001);
        \draw[red] (-6) to (1002);
        \draw[red] (-6) to (1005);
        \draw[red] (-7) to (1004);
        \draw[red] (-7) to (1005);
        \draw[red] (1) to (1000);
        \draw[red] (1) to (1001);
        \draw[red] (2) to (1003);
        \draw[red] (3) to (1000);
        \draw[red] (-6) to (1000);
        \draw[red] (-6) to (1003);
        \draw[red] (4) to (1002);
        \draw[red] (2) to (1002);
    \end{scope}
    \begin{scope}[-Stealth, thick]
        \draw[blue] (-1) to (2);
        \draw[blue] (-1) to (-2);
        \draw[blue] (1) to (-2);
        \draw[blue] (-2) to (3);
        \draw[blue] (-2) to (-3);
        \draw[blue] (2) to (-3);
        \draw[blue] (-3) to (4);
        \draw[blue] (-3) to (-4);
        \draw[blue] (3) to (-4);
        \draw[blue] (-6) to (7);
        \draw[blue] (-6) to (-7);
        \draw[blue] (6) to (-7);
        \draw[blue] (1) to (2);
        \draw[blue] (2) to (3);
        \draw[blue] (3) to (4);
        \draw[blue] (6) to (7);
        \draw[blue] (6) to (7);
        \draw[blue] (7) to (t);
        \draw[blue] (-7) to (t);
        \draw[blue] (7) to (-t);
        \draw[blue] (-7) to (-t);
        \draw[blue] (1) to (s);
        \draw[blue] (-1) to (s);
        \draw[blue] (1) to (-s);
        \draw[blue] (-1) to (-s);
    \end{scope}
\end{tikzpicture}
\caption{Example of a weighted graph $G$ used in proof of \Cref{thm:np-complete}. The blue circle vertices and edges indicate the initial unweighted graph $H$, and the red square vertices and dashed edges indicate the additional vertices and edges with edge weight $|U|$.}
\label{fig:np-hardness-proof}
\end{figure}

    We now show that there exists a truth assignment for $U$ with more than $k$ satisfiable clauses if and only if there exists a shortest path in $G$ with a degree centrality of at least $|U| + k + 4$.

    $\Longrightarrow$ If there is some truth assignment for $U$ with at least $k$ satisfiable clauses, then the corresponding path in $G$ will have at least $|U| + k + 4$ neighbours by construction as we have $|U|$ neighbours from the \textsc{False} assignments and $k$ neighbours from the satisfied clauses, with an additional $4$ neighbours coming from the source and sink vertices.

    $\Longleftarrow$ Suppose we are given a shortest path in $G$ with at least $|U| + k + 4$ neighbours. We can transform this path into the form $\langle (x_1 \oplus \overline{x}_1), \dots, (x_{|U|} \oplus \overline{x}_{|U|})\rangle\>$, where $\oplus$ is the exclusive or operator, and maintain a degree centrality of at least $|U| + k + 4$. Consider the following cases:
    \begin{itemize}
        \item If the given path ends at $s$, $\overline{s}$, $t$ or $\overline{t}$, we remove this node, and the degree centrality will increase by one.
        \item If the given path starts and ends in $H$, for example, $\langle x_i, \dots, x_j\rangle$, then any arbitrary extension of the path so that it starts at $(x_1 \oplus \overline{x}_1)$ and ends at $(x_{|U|} \oplus \overline{x}_{|U|})$ will maintain or increase the degree centrality as the neighbourhood will contain the unselected nodes in $H$.
        \item If the path starts or ends at one of the additional nodes $y_c$, for example, $\langle y_c, x_i, \dots, x_j\rangle$, we can remove $y_c$ from the path and extend the path using the result from the above case, resulting in the path $\langle (x_{1} \oplus \overline{x}_{1}), \dots, (x_{i-1} \oplus \overline{x}_{i-1}), x_i, \dots, x_j, (x_{j + 1} \oplus \overline{x}_{j + 1}), \dots, (x_{|U|} \oplus \overline{x}_{|U|})\rangle$. This new path will not have a degree centrality smaller than the original path as we will gain $y_c$ as a neighbour and the unselected nodes in $H$.
    \end{itemize}

    In all the above cases, the new paths can be computed in polynomial time. The path returned will be a solution to \textsc{MaxSAT} because after the removal of the two start and end nodes, as well as the unvisited $|U|$ nodes in $H$, the remaining $k$ nodes in the neighbourhood correspond to the satisfied clauses and the path corresponds to the truth assignment.
\end{proof}

\begin{corollary}
    Problem~\eqref{eq:degree-problem} with a weighted graph is NP-hard.    
\end{corollary}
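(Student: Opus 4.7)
The plan is to derive the corollary immediately from Theorem~\ref{thm:np-complete}, which already establishes that the decision variant \textsc{MDCWSP-D} is NP-complete. The standard reduction-style argument here is essentially a one-liner: any polynomial-time algorithm for the optimisation version of Problem~\eqref{eq:problem} on weighted graphs would yield a polynomial-time algorithm for \textsc{MDCWSP-D}, contradicting NP-completeness (unless $\mathrm{P} = \mathrm{NP}$).

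More concretely, I would argue as follows. Suppose an algorithm $\mathcal{A}$ solves the weighted variant of Problem~\eqref{eq:problem} in polynomial time, returning a shortest path $P^\ast$ of maximum degree centrality in a weighted graph $G$. Given any instance $(G, \alpha)$ of \textsc{MDCWSP-D}, we run $\mathcal{A}$ on $G$, compute $C(P^\ast)$ in polynomial time (using the definition of $\mathcal{N}(\cdot)$ given in Section~1.2), and answer \textsc{Yes} if and only if $C(P^\ast) > \alpha$. This is a polynomial-time Turing reduction from \textsc{MDCWSP-D} to the optimisation problem, so the optimisation problem is at least as hard as \textsc{MDCWSP-D}, which by Theorem~\ref{thm:np-complete} is NP-complete. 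Hence Problem~\eqref{eq:problem} on weighted graphs is NP-hard.

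There is no real obstacle here: the hard technical work has already been done in Theorem~\ref{thm:np-complete}, specifically in constructing the graph $G$ from a \textsc{Max SAT2} instance and in the bidirectional correspondence between truth assignments with $\ge k$ satisfied clauses and shortest paths with degree centrality $\ge |U| + k + 4$. The corollary simply repackages that result: NP-completeness of the decision version implies NP-hardness of the optimisation version, since the optimisation version trivially solves the decision version. I would therefore keep the proof to at most a few sentences, citing Theorem~\ref{thm:np-complete} and noting that $C(P)$ is computable in polynomial time for any given path $P$.
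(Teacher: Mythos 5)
Your proposal is correct and matches the paper's approach: the paper states this corollary without proof, treating it as an immediate consequence of Theorem~\ref{thm:np-complete}, which is exactly the standard decision-to-optimisation argument you spell out. Nothing further is needed.
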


We now discuss two special cases of Problem~\eqref{eq:degree-problem} on a weighted graph and show how to solve them. The first is the case of positive integer-valued weights, and the second is the case where all weights are generated from some positive continuous distribution. 

\textbf{Case 1:} Although Dijkstra's algorithm can be used to solve the problem of finding the shortest path in a graph with positive weighted edges, we note that \Cref{lem:most-central-path} does not hold for weighted graphs. However, if the weights are positive integers, we can augment the network with auxiliary vertices and edges to represent the weighted edges. Then, \Cref{alg:algorithm} can be applied to the augmented graph with a slight modification to ensure that the auxiliary nodes are not counted in the neighbourhood. We demonstrate the importance of this using the example in \Cref{fig:weighted-comparison}. Given the weighted graph in \Cref{fig:original-weighted}, the most degree-central shortest path is $P^\star = \langle 1, 2\rangle$ with $\mathcal{N}(P^\star) = \{3, 4, 5, 6\}$ and $C_d(P^\star) = 4$. We now construct the augmented graph as shown in \Cref{fig:augmented-unweighted}. We label the auxiliary vertices using tuples $(u, v, i)$ to indicate that the new node is $i$ edges away from node $u$ on the original $(u, v)$ arc. If we were to na\"ively apply \Cref{alg:algorithm} on the augmented graph, the optimal path would be $\widetilde{P} = \langle 1, 2, (2, 6, 1), (2, 6, 2), 6 \rangle$ with $\mathcal{N}(\widetilde{P}) = \{(1, 3, 1), (1, 4, 1), (3, 6, 1), 4, 5\}$ and $C_d(\widetilde{P}) = 5$. However, the corresponding path without the auxiliary nodes $\langle 1, 2, 6 \rangle$ only has degree centrality of 3. This discrepancy arises because the auxiliary vertices introduced for the edges linking nodes 3 and 4 to the path should only contribute one neighbour to the neighbourhood.

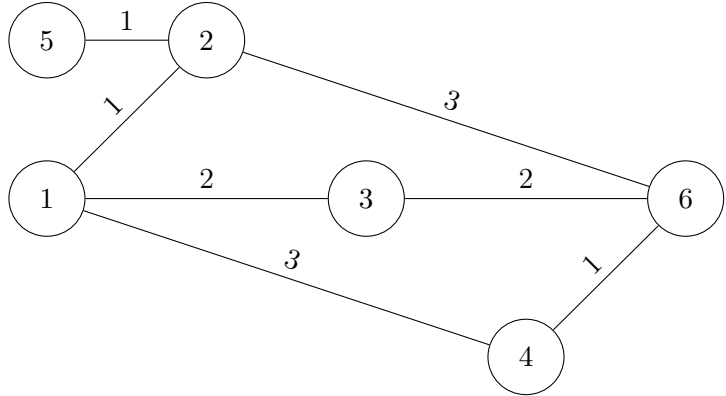
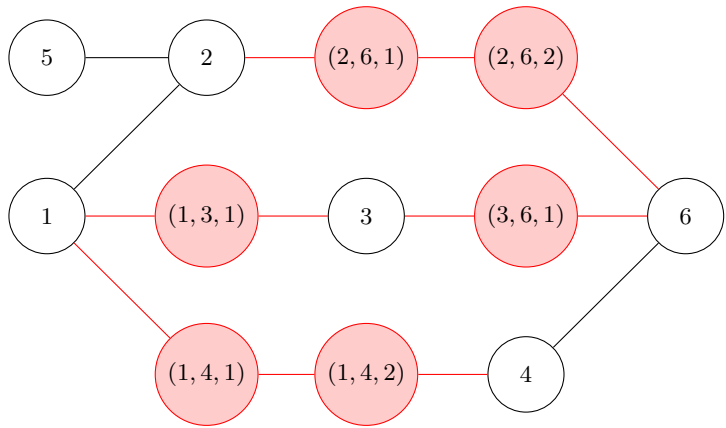
\begin{figure}[!htbt]
    \centering
    \begin{subfigure}[b]{\textwidth}
        \centering
        \begin{tikzpicture}[scale=1.4]
            \clip (-0.5,-2) rectangle (6.5,2);
    
            \begin{scope}[circle, minimum size=10mm]
                \node [draw=black] at (0, 0) (1) {$1$};
                \node [draw=black] at (1.5, 1.5) (2) {$2$};
                \node [draw=black] at (3, 0) (3) {$3$};
                \node [draw=black] at (4.5, -1.5) (4) {$4$};
                \node [draw=black] at (0, 1.5) (5) {$5$};
                \node [draw=black] at (6, 0) (6) {$6$};
            \end{scope}
            
            \path [every node/.style={sloped}]
            (1) edge [black] node [above] {1} (2)
            (1) edge [black] node [above] {2} (3)
            (1) edge [black] node [above] {3} (4)
            (2) edge [black] node [above] {1} (5)
            (2) edge [black] node [above] {3} (6)
            (3) edge [black] node [above] {2} (6)
            (4) edge [black] node [above] {1} (6);
            
        \end{tikzpicture}
        \caption{Original weighted graph (weights on edges).}
        \label{fig:original-weighted}
    \end{subfigure}
    
    \begin{subfigure}[b]{\textwidth}
        \centering
        \begin{tikzpicture}[scale=1.4]
            \clip (-0.5,-2.5) rectangle (6.5,2.5);
    
            \begin{scope}[circle, minimum size=10mm]
                \footnotesize
                \node [draw=black] at (0, 0) (1) {$1$};
                \node [draw=black] at (1.5, 1.5) (2) {$2$};
                \node [draw=black] at (3, 0) (3) {$3$};
                \node [draw=black] at (4.5, -1.5) (4) {$4$};
                \node [draw=black] at (0, 1.5) (5) {$5$};
                \node [draw=black] at (6, 0) (6) {$6$};
                \node [draw=red, fill=red!20] at (1.5, 0) (131)  {$(1,3,1)$};
                \node [draw=red, fill=red!20] at (3, 1.5) (261)  {$(2,6,1)$};
                \node [draw=red, fill=red!20] at (4.5, 1.5) (262)  {$(2,6,2)$};
                \node [draw=red, fill=red!20] at (4.5, 0) (361)  {$(3,6,1)$};
                \node [draw=red, fill=red!20] at (1.5, -1.5) (141) {$(1,4,1)$};
                \node [draw=red, fill=red!20] at (3, -1.5) (142) {$(1,4,2)$};
            \end{scope}
            
            \path [every node/.style={font=\sffamily\small, sloped}]
            (1) edge [black] node [above] {} (2)
            (1) edge [red, dashed] node [above] {} (131)
            (131) edge [red, dashed] node [above] {} (3)
            (1) edge [red, dashed] node [above] {} (141)            
            (141) edge [red, dashed] node [above] {} (142)
            (142) edge [red, dashed] node [above] {} (4)
            (2) edge [black] node [above] {} (5)
            (2) edge [red, dashed] node [above] {} (261)
            (261) edge [red, dashed] node [above] {} (262)
            (262) edge [red, dashed] node [above] {} (6)
            (3) edge [red, dashed] node [above] {} (361)
            (361) edge [red, dashed] node [above] {} (6)
            (4) edge [black] node [above] {} (6);
            
        \end{tikzpicture}
        \caption{Augmented graph (auxiliary nodes and edges in red).} 
        \label{fig:augmented-unweighted}
    \end{subfigure}
    
    \caption{Augmenting a weighted graph.}
    \label{fig:weighted-comparison}
\end{figure}

To avoid the issue of double-counting nodes when computing the neighbourhood of a path, additional work needs to be done to consider the topology of the original graph. This will not worsen the asymptotic running time of \Cref{alg:algorithm} as the additional work in lines 17 and 24 of the algorithm is checking what type of node we are processing. As a result, \Cref{alg:algorithm} will have a running time of $O(w_{\text{sum}} |V|^2 \Delta(G))$, where $w_{\text{sum}}$ is the sum of all edge weights. Note that the running time is not $O(w_{\text{sum}}^3 \Delta(G))$ as we only need to change the number of times the while-loop is executed. The remaining runtime remains the same as we only need to consider setting the starting vertex to be one that is in the original graph, and we only need to compute and compare the degree centrality of the paths when processing a vertex in the original graph. By applying \Cref{alg:algorithm} to the augmented graph, the runtime is now pseudo-polynomial, as it depends on the sum of the edge weights. 

\textbf{Case 2:} If the edge weights $w_{uv}$ are generated from some positive continuous distribution $W_{uv}$ for all $(u, v) \in E$, the shortest path will be unique with probability 1 as $P(W_{uv} = w_{uv}) = 0$ for all $(u, v) \in E$. In this case, the problem can be solved by first solving the all-pairs shortest path problem and then evaluating and comparing the degree centralities of the paths. This can be done in $O(|V|^3)$ time with the Floyd-Warshall algorithm.

\section{Betweenness centrality}\label{sec:btw}
In this section, we consider the problem of finding the most betweenness-central shortest path, which we define as follows.

For a path $P$, we define \emph{path betweenness centrality} $C_b(P)$ to be the number of shortest paths between all pairs of nodes not on $P$ that traverse through at least one node in $P$, i.e.,
\begin{equation*}
C_b(P) := \sum_{u < v \colon u, v \in V \setminus P} g_{uv}(P),    
\end{equation*}
where $g_{uv}(P)$ is the number of shortest (geodesic) paths between nodes $u$ and $v$ that traverse through at least one node in $P$. This measure is sometimes called \emph{stress centrality} \citep{shimbel1953, brandes2001faster}. Note that this definition is different from the one provided by \citet{Everett1999}, who define betweenness centrality for a set of nodes $P$ as
\begin{equation*}
\sum_{u < v \colon u, v \in V \setminus P} \frac{g_{uv}(P)}{g_{uv}},    
\end{equation*}
where $g_{uv}$ is the number of geodesics between $u$ and $v$. We disregard the term in the denominator so that $C_b(P)$ becomes a counting function, which is much easier to work with. If a normalised measure is desired, one may divide $C_b(P)$ by $\sum_{u < v \colon u, v \in V \setminus P}g_{uv}$ to yield a similar result. Furthermore, our definition of path betweenness centrality differs from the definition provided by \citet{PuzisRami2007}, who only count the geodesics that traverse all nodes on $P$. We prefer our definition because it is less restrictive and, as a result, is better suited for the applications that motivate our work. For example, it can be used to count the number of direct paths that an adversarial player is able to compromise in all-to-all communication networks \citep{Brandes2005NetworkAnalysis}. In this setting, we do not require communication to pass through all infected machines because any message that passes through at least one infected computer will be compromised.

It then follows that the problem of finding the most betweenness-central shortest path becomes
\begin{equation}\label{eq:betweenness-path-problem}
    \max \left\{C_b(P) \colon P \in \mathcal{SP}(G)\right\}.
\end{equation}

To solve this problem, we rely on the following lemma:
\begin{lemma}\label{lem:most-betweenness-shortest-path}
    If $\langle s, p_1, \ldots, p_{k}, t \rangle$ is a most betweenness-central shortest path from $s$ to $t$, then $\langle s, p_1, \ldots, p_{k} \rangle$ is a most betweenness-central shortest path from $s$ to $p_{k}$.
\end{lemma}

\begin{proof}
    By \Cref{lem:shortest-path}, it is clear that if $\langle s, p_1, \ldots, p_{k}, t \rangle$ is a shortest path from $s$ to $t$ then $\langle s, p_1, \ldots, p_{k} \rangle$ is a shortest path from $s$ to $p_k$.

Consider the graph in \Cref{fig:most-betweenness-central-structure-end-t}. Let $\langle s, p_1, \ldots p_{k-1}, p_{k}, t\rangle$ be the most betweenness-central shortest path between $s$ and $t$ and $\langle s, \tilde{p}_1, \ldots, \tilde{p}_{k - 1}, p_k, t\rangle$ be an alternative path.

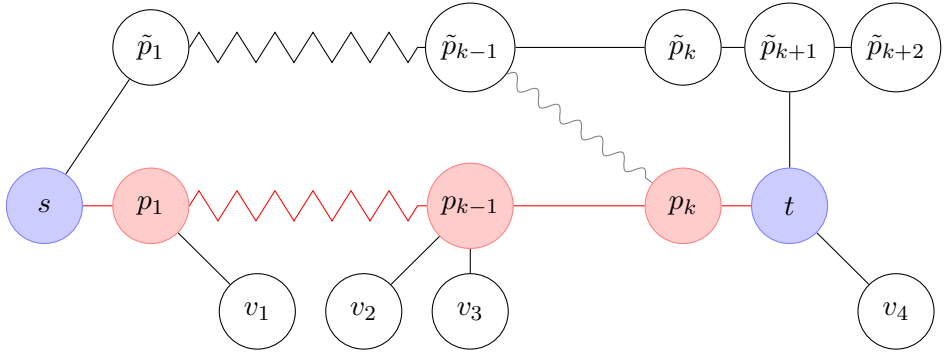
\begin{figure}[H]
    \centering
    \begin{tikzpicture}[scale=1.4]
        \clip (-0.5,-1.8) rectangle (8.5,1.8);
    
        \begin{scope}[circle,minimum size=10mm]
            \node at (0, 0) [draw=blue!50,fill=blue!20] (s) {$s$};
            \node at (8, 0) [draw=blue!50,fill=blue!20] (t) {$t$};
    
            \node at (6, 0) [draw=red!50,fill=red!20] (k) {$p_{k}$};
            
            \node at (1, 1) [draw=black] (1) {$p_1$};
            \node at (3, 1) [draw=black] (j) {$p_{j}$};
            \node at (5, 1) [draw=black] (k-1) {$p_{k - 1}$};
            
            \node at (1, -1) [draw=black] (til1) {$\tilde{p}_1$};
            \node at (3, -1) [draw=black] (tilj) {$\tilde{p}_j$};
            \node at (5, -1) [draw=black] (tilk-1) {$\tilde{p}_{k - 1}$};
        \end{scope}
        
        \path [every node/.style={font=\sffamily\small, sloped}]
        (s) edge [black] node [above] {} (1)
        (k-1) edge [black] node [above] {} (k)
        (s) edge [black] node [above] {} (til1)
        (tilk-1) edge [black] node [above] {} (k)
        (k) edge [black] node [above] {} (t)
        (j) edge [gray, bend left=40] node [above] {} (t)
        (tilj) edge [gray, bend right=40] node [above] {} (t);

        \draw [decorate, decoration={zigzag, segment length=5mm, amplitude=2mm}, black] (1) -- (j);
        \draw [decorate, decoration={zigzag, segment length=5mm, amplitude=2mm}, black] (j) -- (k-1);
        \draw [decorate, decoration={zigzag, segment length=5mm, amplitude=2mm}, black] (til1) -- (tilj);
        \draw [decorate, decoration={zigzag, segment length=5mm, amplitude=2mm}, black] (tilj) -- (tilk-1);

    \end{tikzpicture}
    \caption{Illustration of graph used in the proof of \Cref{lem:most-betweenness-shortest-path}.}
    \label{fig:most-betweenness-central-structure-end-t}
\end{figure}

Given that $\langle s, p_1, \ldots, p_{k}, t \rangle$ is the most betweenness-central shortest path between $s$ and $t$, removing $t$ from the end of the path will update the betweenness centrality score by adding paths that end at $t$ and traverse through at least one of the nodes in $\langle s, p_1, \ldots, p_{k} \rangle$ and subtracting paths that only traverse through $t$ and none of the other nodes in the remaining path. The number of paths subtracted in this case is fixed regardless of whether $\langle s, p_1, \ldots, p_{k}, t \rangle$ or $\langle s, \tilde{p}_1, \ldots, \tilde{p}_{k - 1}, p_k, t\rangle$ was the most betweenness-central shortest path between $s$ and $t$. This, in turn, implies that for $\langle s, p_1, \ldots, p_{k-1}, p_k, t \rangle$ to be the most betweenness-central shortest path, $\langle s, p_1, \ldots, p_{k-1}, p_k \rangle$ must be the most betweenness-central shortest path between $s$ and $p_k$ as this path already accounts for shortest paths that end in $t$ that traverse through at least one node in the path.
\end{proof}

\Cref{lem:most-betweenness-shortest-path} implies that an algorithm similar to \Cref{alg:algorithm} can be applied to solve Problem~\eqref{eq:betweenness-path-problem}; however, some simplifications can be made. Comparing \Cref{lem:most-central-path} and \Cref{lem:most-betweenness-shortest-path}, we observe that we only need to extend the existing most betweenness-central shortest paths by one node rather than two nodes. This means that the nested for-loops in lines 16-20 and 23-27 can be removed and replaced with an if-statement that checks whether $C_b(\langle P_u, v\rangle)$ is larger than $C_b(P_v)$. For clarity, we provide the pseudo-code of this procedure in \Cref{alg:betweenness-algorithm}.

\begin{algorithm}
    \caption{Finding the most betweenness-central shortest path}\label{alg:betweenness-algorithm}
    \begin{algorithmic}[1]
        \Input Graph $G = (V, E)$, starting vertex $s$.
        \Output Most betweenness-central shortest path $P_v$ from $s$ to $v$ for every $v \in V$.
        \For{$v \in V$}
            \State $d_v \gets + \infty$, $P_v \gets \text{undefined}$.
        \EndFor
        \State $P_s \gets \langle s \rangle$.
        \State Insert $(0, s)$ to the queue $Q$.
        \While{$Q$ not empty}
            \State $(d_u, u) \gets$ pop next element of $Q$.
            \For{$v \in \{v' \in V \colon (u, v') \in E$\}}
            \State $d_{\text{new}} \gets d_u + 1$.
                \If{$d_{\text{new}} < d_v$}
                    \State Insert $(d_{\text{new}}, v)$ to the queue $Q$.
                    \State $d_v \gets d_{\text{new}}$, $P_v = \langle s, v\rangle$.
                \ElsIf{$d_{\text{new}} = d_v$}
                    \If{$C_b(\langle P_u, v\rangle) > C_b(P_v)$}
                        \State $P_v \gets \langle P_u, v\rangle$.
                    \EndIf
                \EndIf
            \EndFor
        \EndWhile
    \end{algorithmic}
\end{algorithm}

\begin{lemma}
    \Cref{alg:betweenness-algorithm} has a running time of $O(|E|^2|V|)$
\end{lemma}

\begin{proof}
    \Cref{alg:betweenness-algorithm} is a simplification of \Cref{alg:algorithm}, with the removal of the innermost for-loop and replacement of the centrality measure. As with \Cref{alg:algorithm}, the amount of work required is dominated by the inner for-loop, which is executed a total of $|E|$ times. Within this for-loop, the betweenness of the path needs to be calculated, taking $O(|E||V|)$ time \citep{Brandes2008}. As a result, \Cref{alg:betweenness-algorithm} has a running time of $O(|E|^2|V|)$.
\end{proof}

\begin{remark}\label{remark}
    As noted by \citet{PuzisRami2007} and \citet{Brandes2008}, when computing group betweenness centrality for many different groups, the preprocessing stage of the algorithm (named initialisation, path discovery and counting in \citet{Brandes2008}) only needs to be completed once on each graph and the results can be stored and used to compute the group betweenness centrality in the accumulation stage when required. This does not improve the asymptotic running time of the algorithm as we still require $O(|E||V|)$ time to compute the betweenness centrality of a group after spending $O(|E||V|)$ time on preprocessing.
\end{remark}

\begin{theorem}
    Problem~\eqref{eq:betweenness-path-problem} is solvable in $O(|E|^2|V|^2)$ time.
\end{theorem}
\begin{proof}
    In order to solve Problem~\eqref{eq:betweenness-path-problem}, \Cref{alg:betweenness-algorithm} is executed $|V|$ times, once for each vertex in the network, resulting in an overall runtime of $O(|E|^2|V|^2)$.
\end{proof}

Furthermore, solving Problem~\eqref{eq:betweenness-path-problem} with positively weighted graphs is easier than solving Problem~\eqref{eq:degree-problem}. By \Cref{lem:most-betweenness-shortest-path}, we only need to look back to the preceding node. As a result, \Cref{lem:most-betweenness-shortest-path} also holds for graphs with positively weighted edges. This means that \Cref{alg:betweenness-algorithm} can be adapted to handle graphs with positively weighted edges by updating the new distance with $d_{\text{new}} \gets d_u + w_{uv}$ in line 9 and using a priority queue instead of a standard queue, which is more akin to Dijkstra's algorithm \citep{Dijkstra1959} than the standard BFS procedure. This leads to the following result.
\begin{corollary}
    Problem~\eqref{eq:betweenness-path-problem} on a graph with positively weighted edges is solvable in polynomial time.
\end{corollary}
\begin{proof}
    For a given source node $s$, \Cref{alg:betweenness-algorithm} can be applied but with a priority queue instead of a simple queue. The running time will still be $O(|E|^2|V|)$ as the innermost for-loop is dominated by the cost of computing the path betweenness centrality in $O(|E||V|)$ time rather than the $O(\log(|V|))$ time taken to update the priority queue. As \Cref{alg:betweenness-algorithm} needs to be applied on all possible starting nodes to solve Problem~\eqref{eq:betweenness-path-problem}, the overall running time is $O(|E|^2|V|^2)$.

    As mentioned above, the preprocessing stage required to compute the group betweenness centrality only needs to be performed once for a graph. This will take $O(|V|^2\log(|V|))$ time \citep{Brandes2008}.

    This implies that the overall time required to solve Problem~\eqref{eq:betweenness-path-problem} is $O(|E|^2|V|^2 + |V|^2\log(|V|)) \sim O(|E|^2|V|^2)$ as $|V| < |E|$.
\end{proof}

\section{Closeness centrality}\label{sec:cls}

We now consider the problem of finding the most closeness-central shortest path, where \emph{path closeness centrality} is defined as follows. 

Let $d(u, P)$ be the shortest distance from a node $u$ to the path $P$,
\[
d(u, P) := \min\{d_{uv} \colon v \in P\},
\]
where $d_{uv}$ is the length of the shortest path from $u$ to $v$. To compute path closeness centrality for $P$, we take the maximum over all nodes in $G$: 
\[
C_c^{\max}(P) := \max\{d(u, P) \colon u \in V\}.
\]
This measure is sometimes called \emph{path eccentricity} \citep{Slater1982}. Observe that, consistent with the classic definitions in \citet{Everett1999}, $C_c^{\max}(P)$ is an inverse measure of closeness as larger values indicate less centrality.  Although other aggregation functions can be used to define path closeness centrality (for example, the sum, see \citet{Vogiatzis2014}), we choose this definition as it represents the farthest distance one must travel to reach the path $P$. In particular, $C_c^{\max}(P)$ could represent the largest number of towns through which any commuter in a region must travel to reach the closest train station. This observation leads to the problem of finding the most closeness-central shortest path:
\begin{align}\label{eq:closest-path-problem-max}
    \max \left\{\frac{1}{C_c^{\max}(P)} \colon P \in \mathcal{SP}(G)\right\}.
\end{align}
Note that in Problem~\eqref{eq:closest-path-problem-max}, we seek to maximise the inverse of $C_c^{\max}(P)$ to be consistent with Problem~\eqref{eq:problem}. 
However, it is more natural to work with the equivalent minimization problem:
\begin{align}\label{eq:closest-path-problem}
    \min \left\{C_c^{\max}(P) \colon P \in \mathcal{SP}(G)\right\}.
\end{align}

Next, we show that Problem~\eqref{eq:closest-path-problem} is as hard as the \textsc{Satisfiability} (\textsc{SAT}) problem, which is known to be NP-complete \citep{GareyJohnson1979}.

\vspace{\baselineskip}

\noindent
\begin{tabular*}{\linewidth}{lp{\linewidth-20ex}@{}}
    \toprule
    \multicolumn{2}{l}{\textbf{\textsc{SAT}}} \\
    \textsc{Instance:} & A conjunctive normal form (CNF) formula consisting of a set of $U$ variables and a collection $C$ of clauses over $U$. \\
    \textsc{Question:} & Is there a satisfying truth assignment for $C$? \\ \bottomrule
\end{tabular*}

\vspace{\baselineskip}
We define the corresponding decision version of Problem~\eqref{eq:closest-path-problem} (\textsc{MCCSP-D}) below.

\vspace{\baselineskip}

\noindent
\begin{tabular*}{\linewidth}{lp{\linewidth-20ex}@{}}
    \toprule
    \multicolumn{2}{l}{\textbf{\textsc{MCCSP-D}}} \\
    \textsc{Instance:} &  A graph $G$ and a positive integer $\alpha$. \\
    \textsc{Question:} & Is there a shortest path between two vertices in $G$ such that its closeness centrality is less than $\alpha$? \\ \bottomrule
\end{tabular*}

\vspace{\baselineskip}

\begin{theorem}\label{thm:closest-shortest-path}
    \textsc{MCCSP-D} is NP-complete.
\end{theorem}
\begin{proof}
    \textsc{MCCSP-D} is in NP because all pairwise distances in $G$ can be computed in polynomial time. Hence, computing the closeness centrality of a path can also be completed in polynomial time.

    Given an instance of \textsc{SAT}, we construct a graph $G$ in the following manner. Similar to the proof of \Cref{thm:np-complete}, we start by constructing a graph $H$ with $2(|U| + 1)$ nodes and $4|U|$ directed edges. Again, we start with two unlinked paths $\langle x_1, \dots, x_{|U|}\rangle$ and $\langle \overline{x}_1, \dots, \overline{x}_{|U|} \rangle$ both of length $|U| - 1$. For $i = 1, \dots, |U| - 1$, we add edges $(x_i, \overline{x}_{i + 1})$ and $(\overline{x}_i, x_{i + 1})$ to connect these two paths. To complete $H$, we add in two nodes, $s$ and $t$, connected to the existing graph with the following edges $(s, x_1)$, $(s, \overline{x}_1)$, $(x_{|U|}, t)$ and $(\overline{x}_{|U|}, t)$. Finally, to obtain $G$, we start with $H$, and for each clause $c \in C$, we add the following nodes and edges. For each variable $j$ in clause $c$, where $j \in \cup_{i=1}^{|U|}\{x_i, \overline{x}_i\}$, we construct an undirected path $\langle j, z_{c, j, 1}, \dots, z_{c, j, |U| - 1}, y_c\rangle$ of length $|U|$. Overall, $G$ has $2(|U| + 1) + |C| + C_{\text{len}}(|U| - 1)$ nodes and $4|U|$ directed and $C_{\text{len}}(|U| - 1)$ undirected edges, where $C_{\text{len}} = \sum_{i=1}^{|C|} |c_i|$ is the length of the formula, i.e., the number of variables in the clauses. Comparing this graph to the one used in the proof of \Cref{thm:np-complete}, there is only one copy of the source and sink nodes $s$ and $t$, and the paths from $H$ to $y_c$ consists of $|U|$ unit-weighted edges rather than a single edge with weight $|U|$. 

    We now show that a solution to \textsc{SAT} exists if and only if there exists a shortest path in $G$ with a closeness centrality less than or equal to $|U|$.
    
    $\Longrightarrow$ Given a solution to \textsc{SAT}, the corresponding path in $G$ will have a closeness centrality of exactly $|U|$ (or 1 if the instance of \textsc{SAT} contains no clauses). This is because any node $y_c$ that is associated with an unsatisfied clause will be $|U| + 1$ units away from the shortest path.
    
    $\Longleftarrow$ Given a solution to \textsc{MCCSP-D} with a closeness centrality less than or equal $|U|$, we have the following cases:
    \begin{enumerate}
        \item The path starts at $s$ and ends at $t$. Then, removing $s$ and $t$ yields the corresponding satisfying truth assignment for $C$.
        \item The path starts at $s$ or $(x_i \oplus \overline{x}_i)$ and ends at $t$ or $(x_j \oplus \overline{x}_j)$. Then, any arbitrary extension of the path through any nodes in $H$ such that the path starts at $s$ and ends at $t$ leads to the first case.
        \item The path starts and/or ends at one of the nodes in $G$ but not in $H$. Then, removing the nodes only in $G$ and arbitrarily extending the path such that it starts at $s$ and ends at $t$ again leads to the first case without increasing the closeness-centrality to be greater than $|U|$. Note that if the path only traverses nodes corresponding to one clause, then the mapping to \textsc{SAT} is trivial.
    \end{enumerate}
    
    As these transformations can be computed in polynomial time, \textsc{MCCSP-D} is NP-complete.
\end{proof}

\begin{corollary}\label{cor:closest-path-problem-np-hard}
    Problem~\eqref{eq:closest-path-problem} is NP-hard.
\end{corollary}
As a result of \Cref{cor:closest-path-problem-np-hard}, a polynomial-time algorithm for Problem~\eqref{eq:closest-path-problem} does not exist unless $\text{P} = \text{NP}$.

\section{Computational experiments}\label{sec:computational-experiments}
We tested \Cref{alg:algorithm} and \Cref{alg:betweenness-algorithm} on a large number of synthetic and real-world graphs. The experiments were conducted on a Mac Studio (2023) with Apple M2 Ultra and 128 GB of RAM running macOS Ventura 13.6. The algorithms were implemented in Python 3.12.

\subsection{Data and results}
The synthetic graphs were randomly generated using two procedures outlined below (each type of graph was generated 30 times for each parameter setting):

\begin{itemize}
    \item \textit{Watts-Strogatz} graphs. We considered the `small-world' network proposed by \citet{WattsStrogatz1998}, which represents networks that are ``highly clustered, like regular lattices, yet have small characteristic path lengths''. We considered Watts-Strogatz graphs with 100, 500, 1000, 5000, and 10000 nodes. The number of initial neighbours was set to 4, and the rewiring probability was set to one of 0.1 or 0.2.
    \item \textit{Barab\'{a}si-Albert} graphs. The `scale-free preferential attachment' model proposed by \citet{BarabasiAlbert1999} generates graphs with edges whose degree follows the power law. We considered graphs with the same number of nodes as the Watts-Strogatz instances and set the number of edges attached to any new node to 2 to mirror the edge density of the Watts-Strogatz graphs.
\end{itemize}
In total, we generated and analysed 450 synthetic instances. We averaged the results for each parameter instance and summarised them in \Crefrange{tab:ws1}{tab:ba}.

To compare our results to the MVP algorithm, we also considered the following undirected real-world instances:
\begin{itemize}    
    \item \textit{Sandi Auths}: An academic collaboration network \citep{RossiAhmed2015}.
    \item \textit{IEEE Bus}: A subnetwork of the US Electric Power System in 1962 \citep{DavisHu2011}.  
    \item \textit{Santa Fe}: A collaboration network for the Santa Fe Institute \citep{Girvan2002}. 
    \item \textit{US Air 97}: A transportation network of US Air in 1997 \citep{BatageljMrvar2006, DavisHu2011}.
    \item \textit{Bus}: Bus power system network \citep{DavisHu2011}.
    \item \textit{Email}: Network of e-mail interchanges \citep{Guimera2003, DavisHu2011}.
    \item \textit{Cerevisiae}: Biological network of yeast protein-protein interactions \citep{DavisHu2011}.
\end{itemize}
The results from these real-world graph instances are presented in \Cref{tab:real}.

In addition to the unweighted graphs, we also consider the following weighted graphs:
\begin{itemize}
    \item \textit{Copenhagen calls} and \textit{Copenhagen SMS}: Undirected and directed graphs with weights for the number of phone calls made and text messages sent between university students in Copenhagen over multiple weeks \citep{SapiezynskiPiotr2019}.
    \item \textit{Bitcoin Alpha} and \textit{Bitcoin OTC}: Directed graphs of trust scores between users on the Bitcoin Alpha and Bitcoin OTC platforms \citep{Kumar2016}. Note that we have transformed the weights from a $[-10, 10]$ range to $[1, 21]$ to ensure positive weights.
    \item \textit{Advogato}: Directed graph of trust scores from users of Advogato, an open source software community \citep{Massa2009}. Similar to the Bitcoin graphs, the weights have been rescaled from having weights of $\{0.6, 0.8, 1.0\}$ to $\{3, 4, 5\}$.
\end{itemize}
For these weighted graphs, we first examine the performance of \Cref{alg:algorithm} and \Cref{alg:betweenness-algorithm} on the unweighted graphs obtained by ignoring the weights. We then consider the weighted graphs by applying \Cref{alg:betweenness-algorithm} to the original graphs and \Cref{alg:algorithm} to the augmented graphs. The results are presented in \Crefrange{tab:weighted-copenhagen}{tab:weighted-trust}. Note that these tables also include rows for the number of nodes traversed so that comparisons can be made to the unweighted case.

In addition to the standard statistics, we report the diameter of the graph (diam), the degree of the graph ($\Delta(G)$), and the relative size of the feasible set $|\mathcal{SP}(G)|/U(G)$, where $U(G)$ is the number of pairwise connectable nodes in $G$ (i.e., the number of shortest paths if they were all unique).

\Cref{alg:algorithm} was examined on all of the above instances, but we restrict our analysis of \Cref{alg:betweenness-algorithm} to graphs with less than 1000 vertices.

\begin{table}[htb]
    \centering
    \footnotesize
    \caption{Results for Watts-Strogatz graphs with rewiring probability set to 0.1, averaged over 30 instances. Runtimes and preprocessing times are in seconds. Rows \textit{path length} and \textit{path centrality} give the length and the centrality of the optimal shortest path, respectively.}
    \begin{tabular}{lrrrrr}
        \toprule
	& \multicolumn{5}{c}{Watts-Strogatz (4, 0.1)} \\
	\cmidrule(l){2-6}
	$|V|$                           & 100      & 500       & 1000   & 5000   & 10000   \\
	$|E|$                           & 200      & 1000      & 2000   & 10000  & 20000   \\
	$\Delta(G)$                     & 5.87     & 6.40      & 6.70   & 7.27   & 7.33    \\
	$|\mathcal{SP}(G)| / U(G)$      & 2.45     & 3.18      & 3.53   & 4.37   & 4.80    \\
	diam                            & 10.43    & 15.40     & 17.73  & 22.33  & 24.50   \\     
    \midrule
    \multicolumn{6}{c}{Degree centrality} \\    
    \midrule
	diam centrality                  & 22.23    & 31.20     & 35.37  & 42.53  & 45.20   \\
	path length                      & 9.03     & 12.97     & 14.73  & 18.63  & 19.37   \\
        path centrality                  & 23.17    & 33.87     & 37.60  & 46.90  & 50.40   \\ 	        
	MVP runtime                      & 0.41     & 59.73     & 513.84 & -      & -       \\
	Alg.~\ref{alg:algorithm} runtime & 0.07     & 2.08      & 10.26  & 366.88 & 1712.52 \\
        \midrule
        \multicolumn{6}{c}{Betweenness centrality} \\
        \midrule
	diam centrality                  & 11823.33 & 172901.27 & -      & -      & -       \\  
	path length                      & 8.27     & 11.73     & -      & -      & -       \\
    path centrality                  & 13130.13 & 213512.87 & -      & -      & -       \\	
	preprocessing time               & 0.01     & 0.18      & -      & -      & -       \\
	Alg.~\ref{alg:betweenness-algorithm} runtime                  & 38.61    & 28597.39  & -      & -      & -       \\	
    \bottomrule
    \end{tabular}
    \label{tab:ws1}
\end{table}

\begin{table}[htb]
    \centering
    \footnotesize
    \caption{Results for Watts-Strogatz graphs with rewiring probability set to 0.2, averaged over 30 instances. Runtimes and preprocessing times are in seconds. Rows \textit{path length} and \textit{path centrality} give the length and the centrality of the optimal shortest path, respectively.}
    \begin{tabular}{lrrrrr}
	\toprule
	 &  \multicolumn{5}{c}{Watts-Strogatz (4, 0.2)} \\
	\cmidrule(l){2-6}
	$|V|$                           & 100     & 500       & 1000   & 5000   & 10000   \\
	$|E|$                           & 200     & 1000      & 2000   & 10000  & 20000   \\
	$\Delta(G)$                     & 6.60    & 7.40      & 7.43   & 8.03   & 8.20    \\
	$|\mathcal{SP}(G)| / U(G)$      & 2.04    & 2.25      & 2.35   & 2.57   & 2.66    \\
	diam                            & 8.43    & 11.50     & 12.97  & 16.13  & 17.47   \\ 
        \midrule
        \multicolumn{6}{c}{Degree centrality} \\    
        \midrule
	diam centrality                 & 21.67   & 28.80     & 30.60  & 37.73  & 40.17   \\
	path length                     & 7.03    & 9.33      & 10.17  & 12.53  & 13.27   \\
	path centrality                 & 23.33   & 31.73     & 35.17  & 43.77  & 47.07   \\        
	MVP runtime                     & 0.41    & 55.73     & 482.51 & -      & -       \\
	Alg.~\ref{alg:algorithm} runtime & 0.06    & 2.08      & 9.75   & 359.95 & 1704.02 \\ 
        \midrule
        \multicolumn{6}{c}{Betweenness centrality} \\
        \midrule
	diam centrality                 & 7343.40 & 81079.20  & -      & -      & -       \\
	path length                     & 6.83    & 8.83      & -      & -      & -       \\
	path centrality                 & 8283.87 & 100763.80 & -      & -      & -       \\
	preprocessing time              & 0.01    & 0.18      & -      & -      & -       \\
	Alg.~\ref{alg:betweenness-algorithm} runtime                  & 36.89   & 26936.55  & -      & -      & -       \\
    \bottomrule
    \end{tabular}
\label{tab:ws2}
\end{table}

\begin{table}[htb]
    \footnotesize
    \centering
    \caption{Results for Barab\'{a}si-Albert graphs averaged over 30 instances. Runtimes and preprocessing times are in seconds. Rows \textit{path length} and \textit{path centrality} give the length and the centrality of the optimal shortest path, respectively.}
    \begin{tabular}{lrrrrr}
        \toprule
	{} & \multicolumn{5}{c}{Barab\'{a}si-Albert} \\
	\cmidrule(l){2-6}
	$|V|$                           & 100      & 500       & 1000   & 5000   & 10000   \\
	$|E|$                           & 196      & 996       & 1996   & 9996   & 19996   \\
	$\Delta(G)$                     & 25.27    & 53.83     & 76.53  & 188.20 & 257.40  \\
	$|\mathcal{SP}(G)| / U(G)$      & 2.13     & 2.67      & 2.87   & 3.34   & 3.53    \\
	diam                            & 5.57     & 7.03      & 7.27   & 8.53   & 9.00    \\ 
        \midrule
        \multicolumn{6}{c}{Degree centrality} \\    
        \midrule
	diam centrality                 & 46.17    & 120.10    & 179.30 & 382.57 & 553.27  \\
	path length                     & 3.87     & 4.67      & 5.07   & 5.80   & 6.03    \\
	path centrality                 & 51.87    & 138.33    & 199.83 & 483.93 & 676.47  \\
	MVP runtime                     & 0.36     & 45.92     & 381.45 & -      & -       \\
	Alg.~\ref{alg:algorithm} runtime & 0.07     & 2.31      & 13.46  & 713.06 & 3792.77 \\ 
        \midrule
        \multicolumn{6}{c}{Betweenness centrality} \\
        \midrule
	diam centrality                 & 12879.93 & 346136.47 & -      & -      & -       \\
	path length                     & 4.37     & 5.37      & -      & -      & -       \\
	path centrality                 & 14788.67 & 402991.40 & -      & -      & -       \\	
	preprocessing time              & 0.01     & 0.18      & -      & -      & -       \\
	Alg.~\ref{alg:betweenness-algorithm} runtime                  & 34.66    & 23966.26  & -      & -      & -       \\	
    \bottomrule
    \end{tabular}
    \label{tab:ba}
\end{table}

\begin{table*}[htb]
    \centering
    \footnotesize
    \caption{Results for real-world instances. Runtimes and preprocessing times are in seconds. Rows \textit{path length} and \textit{path centrality} give the length and the centrality of the optimal shortest path, respectively.}
    
    \begin{tabular}{lrrrrrrrr}
	\toprule
	                                & Sandi Auths & IEEE Bus & Santa Fe & US Air 97 & Bus       & Email  & Cerevisiae \\
	\cmidrule(l){2-2}\cmidrule(l){3-3}\cmidrule(l){4-4}\cmidrule(l){5-5}\cmidrule(l){6-6}\cmidrule(l){7-7}\cmidrule(l){8-8}
	$|V|$                           & 86          & 118      & 118      & 332       & 662       & 1133   & 1458       \\
	$|E|$                           & 124         & 179      & 200      & 2126      & 906       & 5451   & 1948       \\      
	$\Delta(G)$                     & 12          & 9        & 29       & 139       & 9         & 71     & 56         \\  
	$|\mathcal{SP}(G)| / U(G)$      & 1.36        & 2.26     & 1.51     & 5.55      & 2.44      & 6.73   & 2.57       \\
	diam                            & 11          & 14       & 12       & 6         & 25        & 8      & 19         \\ 
        \midrule
        \multicolumn{8}{c}{Degree centrality} \\    
        \midrule
	diam centrality                 & 32          & 32       & 90       & 167       & 45        & 159    & 57         \\
	path length                     & 7           & 8        & 10       & 3         & 20        & 4      & 7          \\
	path centrality                 & 34          & 33       & 92       & 206       & 50        & 187    & 156        \\
	MVP runtime                     & 0.25        & 0.73     & 0.66     & 21.96     & 147.03    & 807.92 & 1317.34    \\
	Alg.~\ref{alg:algorithm} runtime & 0.04        & 0.09     & 0.09     & 2.17      & 3.88      & 26.53  & 22.89      \\ 
        \midrule
        \multicolumn{8}{c}{Betweenness centrality} \\
        \midrule
	diam centrality                 & 8540        & 26530    & 20422    & 180104    & 650164    & -      & -          \\
	path length                     & 7           & 13       & 11       & 5         & 18        & -      & -          \\
	path centrality            & 8566        & 26734    & 20422    & 254286    & 705878    & -      & -          \\	
	preprocessing time              & 0.00        & 0.01     & 0.01     & 0.15      & 0.29      & -      & -          \\
	Alg.~\ref{alg:betweenness-algorithm} runtime                  & 13.28       & 61.18    & 47.08    & 10612.16  & 76869.96  & -      & -          \\	
    \bottomrule
    \end{tabular}
    \label{tab:real}
\end{table*}

\begin{table*}[htb]
    \centering
    \footnotesize
    \caption{Results for degree centrality on Copenhagen calls and SMS graph instances. Note that $|E| = w_{\text{sum}}$ for the augmented graphs. Runtimes are in seconds. Rows \textit{path length} and \textit{path centrality} give the length and the centrality of the optimal shortest path, respectively.}
    \label{tab:weighted-copenhagen}
    \begin{tabular}{lrrrrrrrr}
        \toprule
                                 & \multicolumn{4}{c}{Copenhagen calls} & \multicolumn{4}{c}{Copenhagen SMS} \\ \cmidrule(l){2-5}\cmidrule(l){6-9}
        weighted                 & no      & yes     & no      & yes    & no      & yes    & no     & yes    \\
        directed                 & yes     & yes     & no      & no     & yes     & yes    & no     & no     \\
        $|V|$                    & 536     & 3212    & 536     & 3515   & 568     & 23598  & 568    & 24204  \\
        $|E|$                    & 924     & 3600    & 621     & 3600   & 1303    & 24333  & 697    & 24333  \\
        $\Delta(G)$              & 18      & 18      & 18      & 18     & 11      & 11     & 11     & 11     \\
        $|\mathcal{SP}(G)|/U(G)$ & 1.43    & 1.35    & 1.79    & 1.50   & 1.99    & 1.30   & 2.12   & 1.32  \\
        diam                     & 21      & 75      & 22      & 197    & 22      & 1783   & 20     & 3781   \\  
        \midrule
        \multicolumn{9}{c}{Degree centrality} \\    
        \midrule
        diam nodes traversed     & 22      & 25      & 23      & 15     & 23      & 10     & 21     & 12     \\
        diam centrality          & 61      & 25      & 41      & 42     & 30      & 22     & 42     & 30     \\
        path length              & 16      & 22      & 15      & 24     & 18      & 61     & 10     & 67     \\
        path nodes traversed     & 17      & 15      & 16      & 15     & 19      & 20     & 11     & 20     \\
        path centrality          & 63      & 52      & 66      & 63     & 49      & 59     & 50     & 64     \\
        Alg.~\ref{alg:algorithm} runtime           & 0.40    & 4.31    & 0.81    & 4.53   & 1.10    & 74.19  & 1.33   & 96.71  \\ 
        \bottomrule
    \end{tabular}
\end{table*}

\begin{table*}[htb]
    \centering
    \footnotesize
    \caption{Results for betweenness centrality on Copenhagen calls and SMS graph instances. Runtimes and preprocessing times are in seconds. Rows \textit{path length} and \textit{path centrality} give the length and the centrality of the optimal shortest path, respectively.}
    \label{tab:weighted-copenhagen-betweenness}
    \begin{tabular}{lrrrrrrrr}
	\toprule
	& \multicolumn{4}{c}{Copenhagen calls} & \multicolumn{4}{c}{Copenhagen SMS} \\ \cmidrule(l){2-5}\cmidrule(l){6-9}
	weighted                        & no      & yes     & no      & yes     & no       & yes      & no       & yes      \\
	directed                        & yes     & yes     & no      & no      & yes      & yes      & no       & no       \\
	$|V|$                           & 536     & 536     & 536     & 536     & 568      & 568      & 568      & 568      \\
	$|E|$                           & 924     & 924     & 621     & 621     & 1303     & 1303     & 697      & 697      \\
	$\Delta(G)$                     & 18      & 18      & 18      & 18      & 11       & 11       & 11       & 11       \\
	$|\mathcal{SP}(G)|/U(G)$        & 1.43    & 1.35    & 1.79    & 1.50    & 1.99     & 1.30     & 2.12     & 1.32     \\
	diam                            & 21      & 75      & 22      & 197     & 22       & 1783     & 20       & 3781     \\  
        \midrule
        \multicolumn{9}{c}{Betweenness centrality} \\    
        \midrule
	diam nodes traversed            & 22      & 25      & 23      & 15      & 23       & 10       & 21       & 12       \\
	diam centrality                 & 54084   & 35826   & 133280  & 140648  & 127129   & 81092    & 224086   & 118298   \\
	path length                     & 9       & 17      & 7       & 32      & 8        & 81       & 8        & 106      \\
	path nodes traversed            & 10      & 11      & 8       & 16      & 9        & 16       & 9        & 23       \\
	path centrality                 & 59959   & 57692   & 177592  & 154253  & 228071   & 156347   & 280440   & 159783   \\	
	preprocessing time              & 0.06    & 0.11    & 0.10    & 0.26    & 0.14     & 0.45     & 0.15     & 0.55     \\
    Alg.~\ref{alg:betweenness-algorithm} runtime                  & 1118.62 & 1182.93 & 5320.75 & 5741.69 & 12492.22 & 13737.50 & 15609.16 & 17860.28 \\	
    \bottomrule
    \end{tabular}
\end{table*}

\begin{table*}[htb]
    \centering
    \footnotesize
    \caption{Results for trust graph instances. Note that $|E| = w_{\text{sum}}$ for the augmented graphs. Runtimes are in seconds. Rows \textit{path length} and \textit{path centrality} give the length and the centrality of the optimal shortest path, respectively.}
    \label{tab:weighted-trust}
    \begin{tabular}{lrrrrrr}
	\toprule
	& \multicolumn{2}{c}{Bitcoin Alpha} & \multicolumn{2}{c}{Bitcoin OTC} & \multicolumn{2}{c}{Advogato} \\ \cmidrule(l){2-3}\cmidrule(l){4-5}\cmidrule(l){6-7}
	weighted                 & no      & yes      & no      & yes      & no      & yes      \\
	$|V|$                    & 3783    & 281050   & 5881    & 397821   & 6539    & 167369   \\
	$|E|$                    & 24186   & 301453   & 35592   & 427532   & 51127   & 211957   \\
	$\Delta(G)$              & 511     & 511      & 795     & 795      & 805     & 805      \\
	$|\mathcal{SP}(G)|/U(G)$ & 9.50    & 2.68     & 10.74   & 19.18    & 8.54    & 1.83     \\
	diam                     & 10      & 107      & 11      & 107      & 11      & 35       \\  
        \midrule
        \multicolumn{7}{c}{Degree centrality} \\    
        \midrule
	diam nodes traversed     & 11      & 12       & 12      & 12       & 12      & 7        \\
	diam centrality          & 689     & 266      & 971     & 299      & 318     & 846      \\
	path length              & 5       & 39       & 4       & 31       & 4       & 11       \\
	path nodes traversed     & 6       & 9        & 5       & 12       & 5       & 4        \\
	path centrality          & 865     & 891      & 1349    & 1433     & 1318    & 1165     \\
	Alg.~\ref{alg:algorithm} runtime           & 1134.96 & 20996.75 & 4189.95 & 58050.34 & 1345.74 & 24927.09 \\ \bottomrule
    \end{tabular}    
\end{table*}

\subsection{Discussion and insights}

We observe that using \Cref{alg:algorithm}, we can solve Problem~\eqref{eq:degree-problem} much more efficiently than by using the MVP algorithm. For instance, the Watts-Strogatz instances with 1000 nodes, which took around 8 minutes to solve, can now be solved in roughly 10 seconds. Problems involving 5000 and 10000 nodes can now be solved within a fifth and half of an hour, respectively. A similar observation can be made for the Barab\'{a}si-Albert graphs. We did not consider solving larger instances with the MVP algorithm, as one 5000-node instance took over 13 hours to solve. This is consistent with the expected runtime based on the difference between solving 100- and 500-node instances. For real-world instances, the improvement in runtime is also significant. For example, the Cerevisiae instance no longer requires over twenty minutes to solve; it can be solved in just over twenty seconds. We also note that in practice, \Cref{alg:algorithm} scales much more efficiently than stated in \Cref{thm:runtime} because, for most vertices $u$ and paths $P_u$, the size of $\mathcal{P}_u$ and $\mathcal{N}(P_u)$ is small as the number of immediate neighbours is smaller than the maximum degree of the graph $\Delta(G)$. We also observe the difference in running time over the two types of synthetic instances. Although the number of vertices and edges in both sets of graphs are nearly identical, the differences in running times for both algorithms on both sets of graphs are pronounced. \Cref{alg:algorithm} returns a solution faster in Watts-Strogatz graphs than Barab\'{a}si-Albert graphs of similar sizes, while the converse is true for the MVP algorithm. This is consistent with the theoretical running times of the MVP algorithm, and \Cref{thm:runtime} as Watts-Strogatz graphs have smaller maximum degrees but larger diameters than Barab\'{a}si-Albert graphs. 

We also observe that Problem~\eqref{eq:degree-problem} on a weighted graph is much harder to solve. However, as mentioned earlier, we see that the ratio of the running times between the weighted and unweighted case is not $w_{\text{sum}} / |E|$ as implied by the analysis. It is larger. This is because the work done in each iteration to compute and compare $C_d(P)$ is not the same across both cases. More work needs to be done in the weighted case as we need to check the type of node to which we are extending the path. \Cref{tab:weighted-copenhagen} also offers insight into \Cref{alg:algorithm}'s performance on directed and undirected graphs. In order to model the undirected edges, we use two directed edges. This means that although undirected graphs have fewer edges, the algorithm is faster on directed graphs as there is not necessarily bidirectional flow between all pairs of connected vertices. In addition, we can see that the degree centrality of the optimal path is different when weights are considered. There is no guarantee on whether this value is higher or lower. Hence, there is a need to compare the cases separately depending on the problem instance.

On the other hand, finding the most betweenness-central shortest path requires significantly higher computational effort. \Cref{alg:betweenness-algorithm} requires less than 1 minute, on average, to solve Problem~\eqref{eq:betweenness-path-problem} on both Watts-Strogatz and Barab\'{a}si-Albert graphs with 100 nodes, but when the number of vertices is increased to 500, the running time exceeds six hours. This result is mirrored in \Cref{tab:real} where we again see that finding the most betweenness-central shortest path in our real-world instances is considerably more time-consuming than finding the most degree-central shortest path. For example, the Bus instance requires over 21 hours to solve, compared to just under 4 seconds if degree centrality were to be used instead. This result arises because path betweenness centrality is a global measure of centrality that takes into account the entire topology of the graph, whereas path degree centrality is a local measure of centrality that only considers the immediate neighbours of the path. Naturally, this means that computing the betweenness centrality will take much longer than the degree centrality.

Similar to the degree centrality case, \Cref{tab:weighted-copenhagen-betweenness} allows us to compare the performance of our algorithm on weighted graphs. Again, we see that the weighted case takes more time to solve, but the difference is less pronounced than when the degree centrality was considered. This is because the input graph does not need to be manipulated, and the main difference in \Cref{alg:betweenness-algorithm} when solving Problem~\eqref{eq:betweenness-path-problem} on a weighted graph is an additional work required to insert into a priority queue. Furthermore, \Cref{tab:weighted-copenhagen-betweenness} also allows us to see the difference in the performance of the algorithm when comparing directed and undirected graphs. Again, we notice that the running time is much greater for the undirected instances, as we need to consider two directed edges for each undirected edge.

Note that in line 14 of \Cref{alg:betweenness-algorithm}, we compare the betweenness centrality of two paths, $\langle P_u, v\rangle$ and $P_v$. This can be accomplished in two ways: (1) by computing the betweenness centrality of the associated paths ``on the fly'' or (2) by performing preprocessing as suggested in \citet{Brandes2008} and computing the centralities by adding accumulated dependencies as needed; see \Cref{remark} for more details. Naturally, we opt for option (2) because it is more efficient. Therefore, to provide the full picture of the algorithm's performance, we report the preprocessing time for each network instance; see row \textit{preprocessing time} in Tables~\ref{tab:ws1}-\ref{tab:real} and Table~\ref{tab:weighted-copenhagen-betweenness}. Our results indicate that the time required for preprocessing is negligible relative to the running time of \Cref{alg:betweenness-algorithm}, with preprocessing taking less than a second on the largest graphs considered.

We finally note that although in our experiments all three algorithms were run sequentially over all starting vertices, they could be run in parallel for an improved runtime. For example, solving the weighted instance of Bitcoin OTC with Algorithm 1 on 58 cores will take approximately 1000 seconds; see Table~\ref{tab:weighted-trust}.

\section{Concluding remarks}\label{sec:conclusion}

In this paper, we consider the problem of finding the most central shortest path in a graph. We show that the computational complexity of this problem depends on the measure of centrality employed. The problem of finding the most betweenness-central shortest path is solvable in polynomial time for both unweighted and weighted graphs. On the other hand, the problem of finding the most degree-central shortest path can be solved in polynomial time on unweighted graphs, but is NP-hard on weighted graphs. In contrast, the problem of finding the most closeness-central shortest path is NP-hard on both unweighted and weighted graphs. We hypothesise that this relationship arises as betweenness centrality is distance-invariant, i.e., we are just counting the number of shortest paths. Similarly, finding the most degree-central shortest path in an unweighted graph can be computed in polynomial time because optimal paths of length $k$ can be obtained by extending optimal paths of length $k-2$. However, when weights are added to the edges, the result no longer holds, and the problem becomes NP-hard. Unlike these two metrics, closeness centrality has a built-in distance measure, and hence, the problem of finding the most closeness-central shortest path is NP-hard regardless of whether the graph has weighted edges or not.

We summarise the complexity status of the problems considered in this paper in \Cref{tab:complexity-summary}. Although \Cref{tab:complexity-summary} suggests that finding the most betweenness-central shortest path is less complex than finding the most degree-central shortest path, our computational results indicate otherwise, even when considering graphs with positively weighted edges. We believe the discrepancy is caused by the fact that path degree centrality is a local measure of centrality, while path betweenness centrality is a global measure of centrality that takes significantly more time to compute. In practice, this means that a planner should decide whether it is worth the computational effort to find the most betweenness-central shortest path or if the most degree-central shortest path could be used as an adequate proxy.

\begin{table}[!hbtb]
    \centering
    \caption{Comparison of the complexity status of the problems considered on unweighted and positive edge-weighted graphs.}
    \label{tab:complexity-summary}
    \begin{tabular}{lcc}
    \toprule
    Centrality measure & Unweighted graph & Weighted graph \\ \midrule
    Betweenness        & P                & P              \\
    Degree             & P                & NP-hard        \\
    Closeness          & NP-hard          & NP-hard        \\ \bottomrule
    \end{tabular}
\end{table}

Future work in this area may focus on studying the problem of finding the most closeness-central shortest path with other aggregation functions or investigating MIP formulations, heuristics, and approximation schemes for problems established to be NP-hard in this paper.

\bibliographystyle{apalike} 

\bibliography{refs}

\end{document}